\def\dOi{12(2:3)2016}
\keywords{data automata, counter automata, two-variable logic}
\newtheorem{theorem}{Theorem}
\newtheorem{lemma}[theorem]{Lemma}
\newtheorem{proposition}[theorem]{Proposition}
\newtheorem{remark}{Remark}
\newcommand\remove[1]{}
\newcommand\ra{\ensuremath{\rightarrow}}
\newcommand\vass{\textup{VASS}\xspace}
\newcommand\bvass{\textup{BVASS}\xspace}
\newcommand\ebvass{\textup{EBVASS}\xspace}
\newcommand\dataeq{\ensuremath{\sim}}
\newcommand\neighbor{\ensuremath{\mathord{+}1}}
\newcommand\descendant{\ensuremath{\mathord{<}}}
\newcommand\fo{\mathsf{FO}\xspace}   
\newcommand\mso{\mathsf{MSO}\xspace} 
\newcommand\emso{\mathsf{E}\mso}     
\newcommand\fotwo{\ensuremath{\fo^2(\descendant,\neighbor,\mathord{\sim})}\xspace}
\newcommand\emsotwo{\ensuremath{\emso^2(\descendant,\neighbor,\mathord{\sim})}\xspace}
\newcommand\xpath{\textup{XPath}\xspace}
\newcommand\dad{\textup{DTA}$^\#$\xspace}
\newcommand\wdad{\textup{DTA}\xspace}
\newcommand\A{\ensuremath{\mathbb{A}}\xspace}
\newcommand\B{\mathbb{B}}
\newcommand\D{\mathbb{D}}
\newcommand\E{\mathbb{E}}
\newcommand\F{\mathbb{F}}
\newcommand\N{\ensuremath{\mathbb{N}}\xspace}
\newcommand\Aa{\mathcal{A}}
\newcommand\Ba{\mathcal{B}}
\newcommand\Ca{\mathcal{C}}
\newcommand\Da{\mathcal{D}}
\newcommand\Ea{\mathcal{E}}
\newcommand\merc[3]{\mathsf{C}_{#1}\mathpunct{\circleddash}\mathsf{C}_{#2}\to\mathsf{C}_{#3}}
\newcommand\meru[3]{\ifthenelse{\equal{#1}{}}{}{\mathsf{C}_{#1}}\mathpunct{\circleddash}\ifthenelse{\equal{#2}{}}{}{\mathsf{C}_{#2}}\to_1\mathsf{C}_{#3}}
\newcommand\Trees{\mathsf{Trees}}
\newcommand\Forests{\mathsf{Forests}}
\newcommand\tree{\boldsymbol{t}}
\newcommand\atree{\boldsymbol{a}}
\newcommand\btree{\boldsymbol{b}}
\newcommand\dtree{\boldsymbol{d}}
\newcommand\set[1]{\ensuremath{\{#1\}}}
\newcommand\stree{\boldsymbol{s}}
\newcommand\wtree{\boldsymbol{w}}
\newcommand\leaf{\mathsf{leaf}}
\newcommand\last{\mathsf{last}}
\newcommand\fcns{\ensuremath{\mathit{fcns}}\xspace}
\newcommand\rootstate{$\Ba$-state\xspace}
\newcommand\EE{\mathsf{E}}
\newcommand\NS{\EE_{\rightarrow}}
\newcommand\PC{\EE_{\downarrow}}
\newcommand\SAD{\EE_{\downdownarrows}}
\newcommand\SFS{\EE_{\rightrightarrows}}
\renewcommand\parallel{\mathbin{\!<\mkern-12mu>\!}}
\def\frew#1#2#3#4#5#6#7#8{
\setbox0=\hbox{$#6 #7 #1 #8$}%
\setbox1=\hbox{$#6 #7 #2 #8$}%
\ifdim \wd0>\wd1 \rlap{\rlap{\hbox to \wd0{#5}}%
                            {\hbox to\wd0{\hfil\lower #3\box1\relax\hfil}}}{\raise #4\box0}%
\else \rlap{\rlap{\hbox to \wd1{#5}}{\hbox to\wd1{\hfil\raise #4\box0\relax\hfil}}}{\lower #3\box1}%
\fi
}
\newcommand{\toto}{\longleftrightarrow}
\title[\fotwo on data trees]{\fotwo on data trees, data tree automata and branching vector addition systems}
\author[F.~Jacquemard]{Florent Jacquemard}
\author[L.~Segoufin]{Luc Segoufin}
\author{J\'er\'emie Dimino}
\address{INRIA and ENS Cachan}
\email{\{florent.jacquemard, luc.segoufin\}@inria.fr}
\begin{document}

\begin{abstract}
  A data tree is an unranked ordered tree where each node carries a label from
  a finite alphabet and a datum from some infinite domain.  We consider the two
  variable first order logic \fotwo over data trees.  Here $+1$ refers to the
  child and the next sibling relations while $<$ refers to the descendant and
  following sibling relations.  Moreover,~$\sim$ is a binary predicate testing
  data equality.  We exhibit an automata model, denoted \dad, that is more
  expressive than \fotwo but such that emptiness of \dad and satisfiability of
  \fotwo are inter-reducible.  This is proved via a model of counter tree
  automata, denoted \ebvass, that extends Branching Vector Addition Systems
  with States (\bvass) with extra features for merging counters. We show that,
  as decision problems, reachability for \ebvass, satisfiability of \fotwo and
  emptiness of \dad are equivalent.
\end{abstract}

\maketitle

\section*{Introduction}

A data tree is an unranked ordered tree where each node carries a label from a
finite alphabet and a datum from some infinite domain. Together with the
special case of data words, they have been considered in the realm of program
verification, as they are suitable to model the behavior of concurrent,
communicating or timed systems, where data can represent \textit{e.g.}, process
identifiers or time stamps~\cite{Alur12AAA,BCGK12fossacs,Bouyer02ipl}.  Data
trees are also a convenient model for XML documents~\cite{BojanczykMSS09jacm},
where data represent attribute values or text contents.  Therefore finding
decidable logics for this model is a central problem as it has applications in
most reasoning tasks in databases and in verification.

Several logical formalisms and models of automata over data trees have been
proposed. Many of them were introduced in relation to \xpath, the standard
formalism to express properties of XML documents. Although satisfiability of
\xpath in the presence of data values is undecidable, automata models were
introduced for showing decidability of several 
data-aware fragments~\cite{FS11,BojanczykMSS09jacm,Fig10,F09,JL08}.

As advocated in~\cite{BojanczykMSS09jacm}, the logic 
\fotwo can be seen as a relevant fragment of \xpath. 
Here \fotwo refers to the two-variable fragment of first
order logic over unranked ordered data trees, with predicates for the child and the next
sibling relations ($+1$), predicates for the descendant and following sibling
relations ($<$) and a predicate for testing data equality between two nodes ($\sim$).
Over data words, \fotwo was shown to be decidable by a reduction to Petri Nets or,
equivalently, Vector Addition Systems with States (\vass)~\cite{BDMSS11}.  
It is also shown in~\cite{BojanczykMSS09jacm} that reachability for
Branching Vector Addition Systems with States, \bvass, 
reduces to satisfiability of \fotwo over data trees.  
The model of \bvass,  extends \vass with a natural branching feature for running
on trees, see~\cite{acl10} for a survey of the various formalisms equivalent to
\bvass.  As the reachability of \bvass is a long standing open problem, showing
decidability of finite satisfiability for \fotwo seems unlikely in the near future.

This paper is a continuation of the work of~\cite{BDMSS11,BojanczykMSS09jacm}.
We introduce a model of counter automata, denoted \ebvass, and show that
satisfiability of \fotwo is inter-reducible to reachability in \ebvass. 
This model extends \bvass by allowing new features for merging counters.  
In a \bvass the value of a counter at a node $x$ in a binary tree
is the sum of the values of that counter at the children of $x$, 
plus or minus some constant specified by the transition relation. 
In \ebvass constraints can be added modifying this behavior. In particular
(see Section \ref{sec-dad-counter} for a more precise definition) it can enforce the
following at node $x$: 
one of the counters of its left child and one
of the counters of its right child are decreased by the same arbitrary number $n$, 
then the sum is performed as for \bvass, and finally, 
one of the resulting counters is increased by $n$.

The reduction from \fotwo to \ebvass goes via a new
model of data tree automata, denoted \dad. 
Our first result (Section~\ref{sec-fotwo-dad}) shows that languages of data trees 
definable in \fotwo are also recognizable by \dad.  
Moreover the construction of the automaton from the formula is effective. 
Our automata model is a non-trivial extension
from data words to data trees of the Data Automata (DA) model
of~\cite{BDMSS11}, chosen with care in order to be powerful enough to capture the logic 
but also not too powerful in order to match its computational power. 
The obvious extensions of DA to data trees are either too weak to capture \fotwo 
or too expressive and undecidable (see Proposition~\ref{prop-undecid}).
Here we consider the strongest of these extensions, called $\wdad$, 
which is undecidable,
and restrict it into a model called \dad
with an associated emptiness problem is equivalent to satisfiability of \fotwo.

Our second result (Section~\ref{sec-dad-counter}) shows that the emptiness
problem for \dad reduces to the reachability problem for \ebvass.  Finally we
show in Section~\ref{sec-counter-fotwo} that the latter problem can be reduced
to the satisfiability of \fotwo, closing the loop.  Altogether, this implies
that showing (un)decidability of any of these problems would show
(un)decidability of the three of them.
Although this question of (un)decidability remains open,
the equivalence shown in this paper between the decidability of these three problems,
the definition of the intermediate model \dad and the techniques used for proving
the interreductions provides
a better understanding of the three problems, 
and in particular of the emptiness of the branching vector addition systems with states.

{\bf Related work.}
There are many other works introducing automata or logical formalism for data
words or data trees. Some of them are shown to be decidable using counter automata,
see for instance~\cite{DL-tocl08,JL08}. The link between counter automata and
data automata is not surprising as the latter
only compare data values via equality. 
Hence they are invariant under permutation of the data domain and therefore, 
often, it is enough
to count the number of data values satisfying some properties instead of
knowing their precise values.

\section{Preliminaries}\label{sec:prelim}

In this paper $\A$ or $\B$ denote finite alphabets while $\D$ denotes an infinite
data domain. We use $\E$ or $\F$ when we do not care whether the alphabet is
finite or not. We denote by $\E_\#$ the extension of an alphabet $\E$ with a new symbol $\#$
that does not occur in $\E$.

\paragraph{Unranked ordered data forests.}
We work with finite unranked ordered trees and forests over an alphabet~$\E$,
defined inductively as follows: for any $a \in \E$, $a$ is a tree. If
$\tree_1,\cdots,\tree_k$ is a finite non-empty sequence of trees then 
$\tree_1 + \cdots + \tree_k$ is a forest. If $\stree$ is a forest and $a \in \E$, then
$a(\stree)$ is a tree. The set of trees and forests over $\E$ are respectively
denoted $\Trees(\E)$ and $\Forests(\E)$.  A tree is called unary
(resp. binary) when every node has at most one (resp. two) children.
We use standard terminology for trees and forests defining nodes, roots,
leaves, parents, children, ancestors, descendants, following 
and preceding siblings.

Given a forest $\tree \in \Forests(\E)$, and a node $x$ of $\tree$, we denote
by $\tree(x)$ the label of $x$ in $\tree$.

We say that two forests $\tree_1 \in \Forests(\E_1)$ 
and $\tree_2 \in \Forests(\E_2)$ 
\emph{have the same domain} if there is a bijection from the
nodes of $\tree_1$ to the nodes of $\tree_2$ 
that respects the parent and the next-sibling relations. 
In this case we identify the nodes of $\tree_1$ with
the nodes of $\tree_2$ and the difference between $\tree_1$ and $\tree_2$ lies
only in the label associated to each node. 
Given two forests $\tree_1 \in \Forests(\E_1)$, 
$\tree_2 \in \Forests(\E_2)$ having the same domain, 
we define $\tree_1 \otimes \tree_2 \in \Forests(\E_1\times\E_2)$ 
as the forest over the same domain 
and such that for all nodes $x$, 
$\tree_1 \otimes \tree_2 (x) = \langle \tree_1(x),\tree_2(x)\rangle$.

The set of \emph{data forests} over a finite alphabet $\A$ 
and an infinite data domain $\D$  is defined as $\Forests(\A {\times} \D)$. 
Note that every 
$\tree \in \Forests(\A {\times} \D)$ 
can be decomposed into 
$\atree \in \Forests(\A)$ 
and $\dtree \in \Forests(\D)$ such that $\tree = \atree \otimes \dtree$.

\paragraph{Logics on data forests.}
A data forest of $\Forests(\A {\times} \D)$ can be seen as a relational model for
first order logic.  The domain of the model is the set of nodes in the
forest.  There is a unary relation $a(x)$ for all $a \in \A$ containing
the nodes of label $a$. There is a binary relation $x \dataeq y$ containing all
pairs of nodes carrying the same data value of $\D$, 
and binary relations $\NS(x,y)$ ($y$ is the sibling immediately next to $x$),
$\PC(x,y)$ ($x$ is the parent of $y$), and $\SFS$, $\SAD$ which are the non
reflexive transitive closures respectively of $\NS$ and $\PC$, minus
respectively $\NS$ and $\PC$ (\textit{i.e.}, they define two or more navigation steps).
The reason for this non-standard definition of $\SFS$ and $\SAD$ is that it
will be convenient that equality, $\NS$, $\PC$, $\SFS$ and $\SAD$ are disjoint
binary relations. We will often make use of the macro, $x \parallel y$, and say
that $x$ and $y$ are \emph{incomparable}, when none of $x = y$, $\NS(x,y)$,
$\PC(x,y)$, $\SFS(x,y)$ and $\SAD(x,y)$ holds.

Let \fotwo be the set of first order sentences with two variables built on top
of the above predicates. Typical examples of properties definable in \fotwo are
key constraints (all nodes of label $a$ have different data values), $\forall x
\forall y ~a(x) \land a(y) \land x \sim y \ra x=y$, and downward inclusion
constraints (every node $x$ of label $a$ has a node $y$ of label $b$ in its
subtree with the same data value), 
$\forall x \exists y ~a(x) \ra \bigl(b(y) \land x \sim y \land (\SAD(x,y) \lor \PC(x,y))\bigr)$.

We also consider the extension \emsotwo of \fotwo with existentially quantified
monadic second order variables.  Every formula of \emsotwo has the form
$\exists R_1\ldots \exists R_n\; \phi$ where $\phi$ is a \fotwo formula called
the \emph{core}, involving the variables $R_1, \ldots, R_n$ as unary
predicates.  The extension to full monadic second order logic is denoted
$\mso(<,+1,\dataeq)$.

We write $\mso(<,+1)$ for the set of formulas not using the $\sim$ predicates.
These formulas are ignoring the data values, \textit{i.e.}, 
they are classical monadic second-order formulas over forests.


\paragraph{Automata models for forests.} 
We will informally refer to automata and transducers 
for forests and unranked trees over a finite alphabet. 
The particular choice of a model of automata is not relevant here 
and we refer to~\cite[Chapters~1,6,8]{tata} for a detailed description. 
A set of forests accepted by an automaton is called a \emph{regular language} 
and regular languages are exactly those definable in $\mso(<,+1)$.

\paragraph{Automata models for data forests.}  
Given a data forest $\tree = \btree \otimes \dtree \in \Forests(\B {\times} \D)$ 
and a data value $d \in \D$, the \emph{class forest}\label{def:class-forest} 
$\tree[d]$ of $\tree$ associated to the datum $d$ 
is the forest of $\Forests(\B_\#)$ having the same domain as $\tree$
and such that $\tree[d](x)=\btree(x)$ if $\dtree(x)=d$ and $\tree[d](x)=\#$ otherwise.

\begin{figure}[h!]
\hspace{-8.7 cm}\begin{tikzpicture}
  [level distance=7mm,
   level 1/.style={sibling distance=28mm},
   level 2/.style={sibling distance=9mm}]
\node {$(a,1)$}
 child { node {$(b,1)$}
         child { node {$(c,1)$} }
         child { node {$(b,1)$} }
         child { node {$(a,1)$} }
       }
 child { node {$(b,1)$}
         child { node {$(a,2)$} }
         child { node {$(b,2)$} }
         child { node {$(a,1)$} }
       };
\hspace{6cm}
\node {$a$}  [level distance=7mm,
   level 1/.style={sibling distance=15mm},
   level 2/.style={sibling distance=5mm}]
 child { node {$b$}
         child { node {$c$} }
         child { node {$b$} }
         child { node {$a$} }
       }
 child { node {$b$}
         child { node {$\#$} }
         child { node {$\#$} }
         child { node {$a$} }
       };
\hspace{4cm}
\node {$\#$}[level distance=7mm,
   level 1/.style={sibling distance=15mm},
   level 2/.style={sibling distance=5mm}]
 child { node {$\#$}
         child { node {$\#$} }
         child { node {$\#$} }
         child { node {$\#$} }
       }
 child { node {$\#$}
         child { node {$a$} }
         child { node {$b$} }
         child { node {$\#$} }
       };
\end{tikzpicture}
  \caption{A forest $\tree$ followed by its class forests
    $\tree[1]$ and $\tree[2]$}
  \label{fig-class-forest}
\end{figure}

\noindent We now define two models of automata over data trees. The first and most general
one is a straightforward generalization to forests of the automata model over
data words of~\cite{BDMSS11}. The second one adds a restriction in order to
avoid undecidability.

\paragraph{General Data Forest Automata model: \wdad.}  
A \wdad is a pair $(\Aa,\Ba)$ where $\Aa$ is a non-deterministic
letter-to-letter transducer taking as input a forest in $\Forests(\A)$ and
returning a forest in $\Forests(\B)$ with the same domain, while $\Ba$ is a
forest automaton taking as input a forest in $\Forests(\B_\#)$. Intuitively a
\wdad works as follows on a forest $\tree=\atree\otimes\dtree$: first the
transducer $\Aa$ relabels the nodes of $\atree$ into $\btree$ and the forest
automaton $\Ba$ has to accept all class forests of~$\btree\otimes\dtree$. 

More formally a data forest $\tree=\atree\otimes\dtree \in \Forests(\A\times\D)$ 
is accepted by $(\Aa,\Ba)$ iff
\begin{enumerate}
\item there exists $\btree \in \Forests(\B)$ such that
$\btree$ is a possible output of $\Aa$ on $\atree$ and,
\item for all $d\in\D$, 
      the class forest $(\btree\otimes\dtree)[d] \in \Forests(\B_\#)$ is accepted by $\Ba$.
\end{enumerate}

\noindent
Over data words this model was shown to be decidable~\cite{BDMSS11}. 
Unfortunately it is undecidable over data trees.
\begin{proposition}\label{prop-undecid}
Emptiness of \wdad is undecidable.
\end{proposition}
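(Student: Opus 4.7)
The plan is to reduce from the Post Correspondence Problem (PCP), which is undecidable. Given pairs $(u_1, v_1), \ldots, (u_n, v_n)$ over some alphabet $\Sigma$, I would construct a \wdad $(\Aa, \Ba)$ whose accepted language is non-empty iff a PCP solution exists.

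Accepted data forests would encode solutions along a ``position spine'' $P_1, P_2, \ldots, P_M$, with $P_{m+1}$ the child of $P_m$. Each $P_m$ carries four children, intended to represent the $m$-th u-letter $u_m$, the $m$-th v-letter $v_m$, a u-block marker $\mathit{pu}_m$, and a v-block marker $\mathit{pv}_m$. The transducer $\Aa$ assigns to $u_m, v_m$ letters of $\Sigma$ and to $\mathit{pu}_m, \mathit{pv}_m$ indices in $\{1, \ldots, n\}$. Being a finite-state tree transducer, $\Aa$ can easily verify that the u-letters along the spine form a concatenation of blocks from $\{u_1, \ldots, u_n\}$ (and similarly for the v-letters), and that each $\mathit{pu}_m$ (resp.\ $\mathit{pv}_m$) carries the index of the u-block (resp.\ v-block) currently being consumed.

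The data values of the input are then constrained by two disciplines: each pair $(u_m, v_m)$ shares a value $d_m$, and all $\mathit{pu}_m$'s lying inside the $j$-th u-block together with all $\mathit{pv}_m$'s inside the $j$-th v-block share a value $e_j$. The automaton $\Ba$ is designed to accept exactly two kinds of class forests: \emph{position classes}, in which the two marks are a $(u, v)$ sibling pair under a common $P_m$ carrying identical letter labels; and \emph{block classes}, in which the marks form one contiguous segment of $\mathit{pu}$'s along the spine together with one contiguous segment of $\mathit{pv}$'s, all sharing the same index label. Both checks are regular---the position match is local, and contiguity along a path is tracked by a three-state automaton (before, inside, after)---while every other shape (mixed types, non-contiguous segments, unexpected cardinalities) is rejected; the empty class forest is of course accepted.

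For any accepted forest the position-class requirement forces the u- and v-concatenations to coincide letter by letter, and the block-class requirement forces the two index sequences to coincide, which precisely yields a PCP solution. Conversely any PCP solution is easily realized by a data forest meeting both disciplines. Hence non-emptiness of $(\Aa, \Ba)$ is equivalent to the existence of a PCP solution, making emptiness of \wdad undecidable. The main obstacle, in my view, is tuning $\Ba$ so that no malicious data assignment satisfies its per-class tests without actually encoding a solution: in particular $\Ba$ must reject every class in which $u$-, $v$-, $\mathit{pu}$-, $\mathit{pv}$-marks are mixed or in which either the $\mathit{pu}$ or $\mathit{pv}$ segment is broken. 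This is ultimately a careful but routine tree-automaton construction, with the conceptual work concentrated in the choice of encoding that turns the two globally-aligned concatenation constraints of PCP into two classes of locally-checkable class-forest shapes.
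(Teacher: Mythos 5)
Your approach is genuinely different from the paper's: the paper does not reduce from PCP but from emptiness of the Class Automata of Boja\'nczyk and Lasota (already undecidable over data words), encoding each letter $b_i$ as a unary tree of depth $i$ so that the forest automaton $\Ba$ can recover the label of an out-of-class node from the \emph{depth} of its $\#$-subtree. That proof deliberately isolates the exact extra power of \wdad that the $\#$-stuttering restriction later removes. A direct PCP reduction is a legitimate alternative in principle, but yours has a gap.

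The gap is in the claim that ``the block-class requirement forces the two index sequences to coincide.'' What your block classes actually enforce is only an \emph{index-preserving bijection} between the $u$-blocks and the $v$-blocks (and even that requires an extra length check, namely that the $pu$-segment labelled $i$ has length exactly $|u_i|$, which you do not mention but which $\Ba$ could do); nothing forces this bijection to be order-preserving, so the two index \emph{sequences} need not be equal. Concretely, take the instance $(u_1,v_1)=(a,b)$, $(u_2,v_2)=(b,a)$, which has no solution. The forest with spine of length $2$, $u$-letters $ab$ decomposed as $u_1u_2$ (so $pu$-labels $1,2$) and $v$-letters $ab$ decomposed as $v_2v_1$ (so $pv$-labels $2,1$), with data values pairing $pu_1$ with $pv_2$ (both labelled $1$) and $pu_2$ with $pv_1$ (both labelled $2$), passes every one of your tests: the strings agree letterwise, every position class is a matching sibling pair, and every block class is one contiguous $pu$-segment plus one contiguous $pv$-segment with a common index label. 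Yet there is no PCP solution. Moreover, no regular condition on a single class forest can detect ``these are the $k$-th blocks on both sides,'' since the $k$-th $u$-block and $k$-th $v$-block of a genuine solution occupy unboundedly far-apart position intervals. The standard repair is to \emph{chain} consecutive blocks with shared data values (e.g., the last position of the $k$-th $u$-block shares a value with the first position of the $k$-th $v$-block, and the last position of the $k$-th $v$-block with the first position of the $(k{+}1)$-st $u$-block), turning the global order constraint into locally checkable two-element classes; as written, your construction does not establish undecidability.
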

\begin{proof}
We show that \wdad can simulate the Class Automata
of~\cite{BojanczykLasota10lics}.  This latter model has an undecidable
emptiness problem, already when restricted to data words, 
\textit{i.e.}, forests of the form $\langle a_{1}, d_1\rangle + \ldots + \langle a_{m}, d_m\rangle$.  
It captures indeed the class of languages of words - without data - recognized by counter automata.
Like Data Automata, Class Automata are defined as pairs made of one transducer~$\Aa$ and one word automaton~$\Ba$.
However, 
the $\Ba$ part in the Class Automata model has access to the label of the nodes that are not in the
class, while it sees only $\#$ in the Data Automata case. 
This extra power implies undecidability.

\newcommand\classproj{\llbracket d \rrbracket}
\newcommand\classtreeprime{\tree'\classproj}
\newcommand\classword{\wtree\classproj}

We assume two finite alphabets $\A$ and $\B$, writing the latter
\textit{in extenso} as $\B = \{ b_1,\ldots, b_n \}$.  
A \emph{class automaton} over
$\A \times \D$ is a pair $\Ca = (\Aa, \Ba)$ where $\Aa$ is a non-deterministic
letter-to-letter word transducer from $\A$ into $\B$ and $\Ba$ is a word
automaton taking as input words over the alphabet $\B \times \{ 0,1 \}$.
In order to define the acceptance of data words by class automata, 
we shall use a notion of class word associated to a data word $\wtree = \btree\otimes\dtree$
and a value $d \in \D$, denoted $\classword$, defined as 
the word having the same domain as $\wtree$
and such that, for every node $x$ of $\wtree$, 
$\classword(x) = \langle \btree(x), 1\rangle$ if $\dtree(x) = d$
and 
$\classword(x) = \langle \btree(x), 0\rangle$ otherwise.
A data word $\wtree = \atree\otimes\dtree$ is accepted by $\Ca$ iff
\begin{enumerate} 
\item there exists a word $\btree$ over $\B$ such that
$\btree$ is a possible output of $\Aa$ on $\atree$ and,
\item for all $d\in\D$, the class word $(\btree\otimes \dtree)\classproj$ 
is accepted by $\Ba$.
\end{enumerate}

Given a class automaton $\Ca= (\Aa, \Ba)$ over $\A\times\D$, we construct a \wdad $\Ca'$ such that $\Ca$
accepts a data word iff $\Ca'$ accepts a data tree.
The idea of the reduction is that we replace each letter $b_i$ by a tree of
depth $i$. Hence, even if $b_i$ is replaced by $\#$ during the run of $\Ca'$ (conversion to class word),
this label can still be recovered.

Let $\mathbb{O}$ be a new alphabet containing the two symbols $b$ and $\#$. 
For any symbol $s$ and $1 \leq i \leq n$, let $s^i$ be the unary data tree of depth $i$
defined recursively by: $s^1 = s$ and $s^{i+1} = s(s^i)$.
We associate to a data word $\wtree = \langle b_{i_1}, d_1\rangle + \ldots +
\langle b_{i_m}, d_m\rangle$ a data forest 
$\hat{\wtree} \in \Forests(\mathbb{O} \times \D)$ defined by 
$\hat{\wtree} = \bigl(\langle b,d_1\rangle^{i_1+1} + \ldots + \langle b, d_m\rangle^{i_m+1}\bigr)$.

From the word automaton $\Ba$ we can construct a forest automaton $\Ba'$ accepting exactly the set of
class forests $\hat{\wtree}[d]$ such that $\wtree\classproj$ is accepted by
$\Ba$, for all $d\in\D$. 

The best way to see this is to use $\mso(<,+1)$ logic. The language recognized
by $\Ba$ can be defined by a formula $\varphi$ of $\mso(<,+1)$. The formula
corresponding to $\Ba'$ is constructed by replacing in $\varphi$ each atom of
the form $\langle b_i,1\rangle$ by $b_i(x)$ and each atom of the form $\langle
b_i,0\rangle$ by a formula testing that $x$ has label $\#$ and that the
subtree rooted at $x$ has depth $i$.

\medskip

From there it is now easy to construct an $\Aa'$ such that the \wdad $(\Aa',\Ba')$
accepts a data forest iff the class automaton $\Ca=(\Aa,\Ba)$ accepts a data word.
\end{proof}

\paragraph{Restricted Data Forest Automata model: \dad.}  
The second data tree automata model we consider is defined as \wdad with a
restriction on $\Ba$. The restriction makes sure that $\Ba$ ignores repeated
and contiguous occurrences of $\#$ symbols.  This ensures that for each class
forest $\tree[d]$, not only the automata cannot see the label of a node whose
data value is not $d$, but also can not see the shape of subtrees of nodes
whose data value differs from $d$. In particular it can no longer count the number
of $\#$ symbols in a subtree and the undecidability proof of
Proposition~\ref{prop-undecid} no longer works.

A set $L \subseteq \Forests(\B)$ is called $\#$-\emph{stuttering} 
iff it is closed  
under the rules depicted in Figure~\ref{fig-rules}. 
Intuitively these rules should be understood as follows: 
if a subforest is matched by a left-hand side of a rule
(when the variables $x$ and $y$ 
 are replaced by (possibly empty) forests), 
then replacing this subforest by the corresponding right-hand side 
(with the same variable replacement)  
yields a forest also in $L$,
and the other way round. 

\begin{figure}[h!]
\footnotesize
\[\hspace{-8 pt}
\begin{array}{rclcrclcrclrclc}
\begin{minipage}[t]{10mm}
\vspace{-3.5mm}
\hspace{4mm}\begin{tikzpicture}
  [level distance=7mm,
   level 1/.style={sibling distance=10mm},
   level 2/.style={sibling distance=5mm}]
\node {$\#$}
 child { node {$\#$}
         child { node {$x$} }
       };
\end{tikzpicture}
\end{minipage}
&
\vspace{-0mm}\toto
&
\begin{minipage}[t]{5mm}
\vspace{-3.5mm}
\begin{tikzpicture}
  [level distance=7mm,
   level 1/.style={sibling distance=10mm},
   level 2/.style={sibling distance=5mm}]
\node {$\#$}
 child { node {$x$} };
\end{tikzpicture}
\end{minipage}
& ~~ &
\begin{minipage}[t]{5mm}
\vspace{-3.5mm}
\begin{tikzpicture}
  [level distance=7mm,
   level 1/.style={sibling distance=10mm},
   level 2/.style={sibling distance=5mm}]
\node {$\#$}
 child { node {$x$} };
\end{tikzpicture}
\end{minipage}
+ \; \#
&
\vspace{-0mm}\toto
&
\begin{minipage}[t]{5mm}
\vspace{-3.5mm}
\begin{tikzpicture}
  [level distance=7mm,
   level 1/.style={sibling distance=10mm},
   level 2/.style={sibling distance=5mm}]
\node {$\#$}
 child { node {$x$} };
\end{tikzpicture}
\end{minipage}
& \quad &
\# \; + 
\begin{minipage}[t]{5mm}
\vspace{-3.5mm}
\begin{tikzpicture}
  [level distance=7mm,
   level 1/.style={sibling distance=10mm},
   level 2/.style={sibling distance=5mm}]
\node {$\#$}
 child { node {$x$} };
\end{tikzpicture}
\end{minipage}
&
\vspace{-0mm}\toto
&
\begin{minipage}[t]{5mm}
\vspace{-3.5mm}
\begin{tikzpicture}
  [level distance=7mm,
   level 1/.style={sibling distance=10mm},
   level 2/.style={sibling distance=5mm}]
\node {$\#$}
 child { node {$x$} };
\end{tikzpicture}
\end{minipage}
& \quad &
\begin{minipage}[t]{5mm}
\vspace{-3.5mm}
\begin{tikzpicture}
  [level distance=7mm,
   level 1/.style={sibling distance=10mm},
   level 2/.style={sibling distance=5mm}]
\node {$\#$}
 child { node {$x$} };
\end{tikzpicture}
\end{minipage}
+ y
&
\vspace{-0mm}\toto
&
y +
\begin{minipage}[t]{5mm}
\vspace{-3.5mm}
\begin{tikzpicture}
  [level distance=7mm,
   level 1/.style={sibling distance=10mm},
   level 2/.style={sibling distance=5mm}]
\node {$\#$}
 child { node {$x$} };
\end{tikzpicture}
\end{minipage}
\end{array}
\]
\caption{Closure rules for $\#$-stuttering sets. $x$ represents an arbitrary forest.}
\label{fig-rules}
\end{figure}
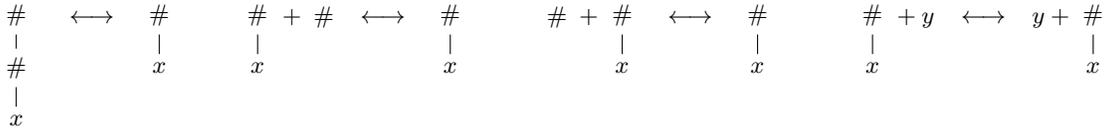

For instance if $L$ is $\#$-stuttering and contains the trees $\tree[1]$ and
$\tree[2]$ of Figure~\ref{fig-class-forest}, then it should also contain the
trees in Figure~\ref{fig-class-closure}.

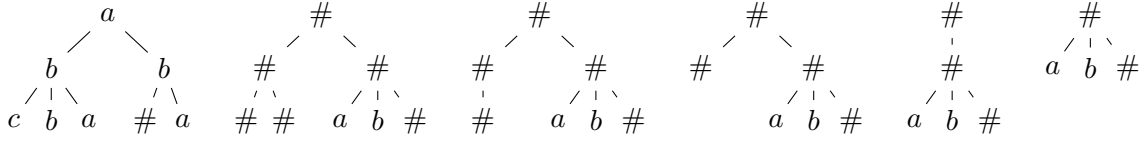
\begin{figure}
\[
\begin{tikzpicture}[baseline]
\node {$a$}  [level distance=7mm,
   level 1/.style={sibling distance=15mm},
   level 2/.style={sibling distance=5mm}]
 child { node {$b$}
         child { node {$c$} }
         child { node {$b$} }
         child { node {$a$} }
       }
 child { node {$b$}
         child { node {$\#$} }
         child { node {$a$} }
       };
\end{tikzpicture}
\quad\!
\begin{tikzpicture}[baseline]
\node {$\#$}[level distance=7mm,
   level 1/.style={sibling distance=15mm},
   level 2/.style={sibling distance=5mm}]
child { node {$\#$}
        child { node {$\#$} }
        child { node {$\#$} }
      }
child { node {$\#$}
        child { node {$a$} }
        child { node {$b$} }
        child { node {$\#$} }
      };
\end{tikzpicture}
\quad\!
\begin{tikzpicture}[baseline]
\node {$\#$}[level distance=7mm,
   level 1/.style={sibling distance=15mm},
   level 2/.style={sibling distance=5mm}]
child { node {$\#$}
        child { node {$\#$} }
      }
child { node {$\#$}
        child { node {$a$} }
        child { node {$b$} }
        child { node {$\#$} }
      };
\end{tikzpicture}
\quad\!
\begin{tikzpicture}[baseline]
\node {$\#$}[level distance=7mm,
   level 1/.style={sibling distance=15mm},
   level 2/.style={sibling distance=5mm}]
child { node {$\#$} }
child { node {$\#$}
        child { node {$a$} }
        child { node {$b$} }
        child { node {$\#$} }
      };
\end{tikzpicture}
\quad\!
\begin{tikzpicture}[baseline]
\node {$\#$}[level distance=7mm,
   level 1/.style={sibling distance=15mm},
   level 2/.style={sibling distance=5mm}]
child { node {$\#$}
        child { node {$a$} }
        child { node {$b$} }
        child { node {$\#$} }
      };
\end{tikzpicture}
\quad\!
\begin{tikzpicture}[baseline]
\node {$\#$}[level distance=7mm,
   level 1/.style={sibling distance=5mm},
   level 2/.style={sibling distance=5mm}]
         child { node {$a$} }
         child { node {$b$} }
         child { node {$\#$} };
\end{tikzpicture}
\]
\caption{Closure of $\{ \tree[1], \tree[2] \}$ of Figure~\ref{fig-class-forest}.}
\label{fig-class-closure}
\end{figure}

\medskip Typical examples of $\#$-stuttering languages are those testing that
no two nodes of label $a$ occur in $\tree[d]$ (key constraint) or that each
node of label $a$ has a descendant of label $b$ in $\tree[d]$ (inclusion
constraint).  Other typical $\#$-stuttering languages are those defined by
formulas of the form $\forall x \forall y\; a(x) \wedge b(y) \wedge x \dataeq y
\to \neg \SFS(x, y)$. Indeed the  $\#$-stuttering rules do not affect the
relationship between pairs of nodes not labeled by~$\#$.

Typical examples of languages that are not $\#$-stuttering are
those counting the number of nodes of label~$\#$.  
Note that $\#$-stuttering
languages are closed under union and intersection.

\medskip

We define \dad as those \wdad $(\Aa,\Ba)$ such that the language recognized by $\Ba$ is $\#$-stuttering.
\medskip

We conclude this section with the following simple lemma whose proof is a
straightforward Cartesian product construction. We use the term \emph{letter
  projection} for a relabeling function defined as $h : \A' \to \A$, where $\A$
and $\A'$ are alphabets.

\begin{lemma} \label{lem:conjunction-disjunction}
The class of \dad languages is closed 
under union, intersection and letter projection.
\end{lemma}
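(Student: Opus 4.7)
My plan is to give the three closures by standard automata-theoretic constructions, carrying out the operations on the two components $\Aa$ and $\Ba$ of a \dad separately, and then checking that the language recognized by the new $\Ba$ remains $\#$-stuttering.

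For intersection, suppose $\Da_i = (\Aa_i, \Ba_i)$ for $i=1,2$ are \dad with intermediate alphabet $\B_i$. I would define $\Aa$ as a transducer into $\B_1 \times \B_2$ that on input $\atree$ nondeterministically guesses outputs $\btree_1$ of $\Aa_1$ and $\btree_2$ of $\Aa_2$ and returns $\btree_1 \otimes \btree_2$. For the class automaton, I would build $\Ba$ on $(\B_1 \times \B_2)_\#$ as the product of (suitable extensions of) $\Ba_1$ and $\Ba_2$: on a node labeled by a pair $\langle b_1, b_2\rangle$, the $i$-th component sees $b_i$, and on a node labeled $\#$ both components see $\#$. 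A class forest of $(\btree_1 \otimes \btree_2) \otimes \dtree$ then projects on each coordinate to the class forest of $\btree_i \otimes \dtree$, so acceptance coincides with membership in $L(\Da_1) \cap L(\Da_2)$. For union, I would instead take the disjoint union $\B_1 \sqcup \B_2$, use an $\Aa$ that runs $\Aa_1$ or $\Aa_2$ (and records which), and let $\Ba$ be the disjoint union of $\Ba_1$ and $\Ba_2$ lifted to this larger alphabet.

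For letter projection by $h : \A' \to \A$, I keep the same $\Ba$ and only modify the transducer: given $\Da = (\Aa, \Ba)$ over $\A$, I would build $\Aa'$ on input $\A'$ that guesses, letter by letter, a preimage $\atree \in h^{-1}(\atree')$ and then simulates $\Aa$ on $\atree$. This is a straightforward letter-to-letter transducer composition, and the resulting \dad clearly accepts precisely the image $h(L(\Da))$.

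The only nonroutine point is checking the $\#$-stuttering of the new $L(\Ba)$. For letter projection this is automatic since $\Ba$ is unchanged. For union it follows because union of $\#$-stuttering sets is $\#$-stuttering (each rule in Figure~\ref{fig-rules} can be applied inside either language). For intersection, I observe that each closure rule modifies only nodes labeled $\#$ in a forest over $(\B_1 \times \B_2)_\#$, and that the two coordinate projections send $\#$ to $\#$ and preserve non-$\#$ labels; therefore applying the rule in the product corresponds to applying the same rule in each projection, which keeps both projections inside $L(\Ba_1)$ and $L(\Ba_2)$ by their individual $\#$-stuttering, hence in the product language. This is the step I expect to be the most delicate but it is still essentially mechanical given the shape of the rules.
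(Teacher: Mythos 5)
Your overall strategy---performing the Boolean/projection operations componentwise on $\Aa$ and $\Ba$ and then verifying that the resulting $\Ba$-language is still $\#$-stuttering---is exactly the ``straightforward Cartesian product construction'' the paper invokes without detail, and your intersection and letter-projection constructions are correct. In particular your key observation for intersection is right: the closure rules of Figure~\ref{fig-rules} only create, delete or permute $\#$-labelled nodes, so they commute with the coordinate projections $(\B_1\times\B_2)_\#\to(\B_i)_\#$, and the product language is $\#$-stuttering.

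The union construction as you state it, however, has a genuine (if degenerate) flaw. Acceptance of a \dad quantifies over \emph{all} $d\in\D$, including the infinitely many data values not occurring in the input, whose class forests are entirely labelled $\#$ and therefore carry no trace of which of $\Aa_1,\Aa_2$ was simulated. Concretely, take $\B_1,\B_2$ disjoint copies of $\{a\}$, let $\Aa_1,\Aa_2$ be the identity (up to renaming), let $\Ba_1$ accept exactly the forests with exactly one non-$\#$ node, and let $\Ba_2$ accept exactly the all-$\#$ forests; both languages are regular and $\#$-stuttering, and $L(\Da_1)=L(\Da_2)=\emptyset$, since every data forest has both an occurring and a non-occurring data value. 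Yet your disjoint-union automaton accepts the one-node data tree $\langle a,1\rangle$: the class forest of $1$ is accepted by $\Ba_1$ and every other class forest is accepted by $\Ba_2$. The repair is short: all all-$\#$ forests are interconvertible under the rules of Figure~\ref{fig-rules}, so a $\#$-stuttering language contains either all of them or none. In the latter case $L(\Da_i)=\emptyset$ and that disjunct can simply be dropped; in the former case your construction is sound, because the all-$\#$ class forests are then accepted by whichever $\Ba_i$ matches the branch chosen by $\Aa$. You should make this case distinction explicit (equivalently, first restrict each $\Aa_i$ to inputs whose all-$\#$ relabelling is accepted by $\Ba_i$, a regular condition).
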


\section{From \texorpdfstring{\fotwo}{FO2} to \texorpdfstring{\dad}{DTA}}
\label{sec-fotwo-dad}

In this section we show the following result.
\begin{theorem}\label{th-fotwodad}
Given a formula $\phi$ in \fotwo, there exists a \dad,
effectively computable from $\phi$,
accepting exactly the set of data forests satisfying~$\phi$.
\end{theorem}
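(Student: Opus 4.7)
The plan is to convert $\phi$ into Scott normal form and build the \dad compositionally using Lemma~\ref{lem:conjunction-disjunction}. By the standard Scott-style rewriting, $\phi$ is equisatisfiable with an \emsotwo sentence $\exists R_1\cdots R_n\; \forall x\forall y\,\alpha(x,y) \wedge \bigwedge_i \forall x\exists y\,\beta_i(x,y)$ with $\alpha$ and $\beta_i$ quantifier-free. Closure under letter projection lets me absorb the $R_j$'s into the alphabet, and closure under intersection lets me treat each conjunct in isolation.

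The universal conjunct $\forall x\forall y\,\alpha(x,y)$ is decomposed along the (pairwise exclusive) positional type $\pi\in\{{=},\NS,\PC,\SFS,\SAD,\parallel\}$ between $x$ and $y$ and the data relation $\sigma\in\{\sim,\not\sim\}$. For each $(\pi,\sigma)$ the constraint forbids a finite set of 1-type pairs. Data-blind constraints ($\sigma=\not\sim$) are regular properties of the labeled forest and are enforced by the transducer $\Aa$. Same-data constraints ($\sigma=\sim$) are verified by $\Ba$ on each class forest, where both endpoints of any constrained pair appear with their genuine $\B$-labels; since only non-$\#$ nodes are consulted, the resulting language is $\#$-stuttering.

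For each existential conjunct $\forall x\exists y\,\beta_i(x,y)$, the transducer $\Aa$ non-deterministically annotates each node $x$ with a witness profile $(\tau,\pi,\sigma)$ giving the 1-type, positional type and data relation of a chosen witness $y$. Same-data promises ($\sigma=\sim$) are verified by $\Ba$ inside the class forest containing $x$, where both $x$ and its witness are visible and the existence of a witness of 1-type $\tau$ at positional relation $\pi$ is a $\#$-stuttering property. Different-data promises ($\sigma=\not\sim$) are handled by letting $\Aa$ also tag the chosen candidate witness~$y$, check regularly that $y$ has 1-type $\tau$ and that $\pi(x,y)$ holds (both data-blind), and then have $\Ba$ verify across the class forests that $x$ and its tagged $y$ do belong to different classes.

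The main obstacle will be this last step: because $\Ba$ sees only $\#$ in place of the witness inside $x$'s own class forest, certifying that the tag on $y$ corresponds to a datum distinct from $x$'s requires a careful exchange of information between $\tree[\mathrm{data}(x)]$ and $\tree[\mathrm{data}(y)]$, together with a pigeonhole argument distinguishing ``every $(\tau,\pi)$-candidate in fact lies in $x$'s class'' from its negation. The price to pay is that the $\Ba$ automaton has to handle promise markers and witness tags simultaneously in every class forest, and one must show that the conjunction of all these conditions survives the $\#$-stuttering closure rules of Figure~\ref{fig-rules}, which is what makes this direction nontrivial.
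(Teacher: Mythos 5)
There is a genuine gap, and it sits exactly where the real work of this theorem lies. Your treatment of the universal conjuncts with $\sigma={\not\dataeq}$ is wrong: a constraint such as $\forall x\forall y\;\alpha(x)\wedge\beta(y)\wedge x\not\dataeq y\to\neg \SFS(x,y)$ is \emph{not} a ``data-blind'' regular property of the labeled forest --- whether the hypothesis applies to a given positional pair depends on whether the two nodes carry different data, which the transducer $\Aa$ cannot see. These are precisely the hard cases of the theorem. The paper handles them by a combinatorial reduction: for $\SFS$, for instance, it picks the leftmost $\alpha$-node $x$ of each horizontal path, marks its data value $d$ (the correctness of the marking being certified by $\Ba$ as a $\#$-stuttering condition, as in Remark~\ref{remark-color-class}), and observes that any violating pair $(x',y')$ with $y'$ of datum different from $d$ yields the violating pair $(x,y')$; hence it suffices to check pairs in which one endpoint carries the marked datum, which after marking is a regular property. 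Analogous but more delicate arguments (the topmost $\alpha$-node on a vertical path for $\SAD$; a least-common-ancestor analysis of two incomparable $\alpha$-nodes for $\parallel$) are needed for the other two cases. None of this appears in your proposal.

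The existential conjuncts with ${\not\dataeq}$ are likewise not handled: you name the obstacle but do not overcome it. Tagging ``the chosen candidate witness $y$'' of each $x$ cannot work as stated, for two reasons: with a finite alphabet the transducer cannot link each $x$ to its own witness when different nodes need different witnesses, and in the class forest of $x$'s datum the witness appears as a bare $\#$ (the class-forest construction replaces its label, tag included), so no single class forest sees both $x$'s identity and the tag on $y$. The paper again resolves this combinatorially rather than by information exchange: e.g.\ for $\SFS(x,y)$ it marks the data value of the rightmost $\beta$-node $y$ of each horizontal path; every unmarked $\alpha$-node lying to the left of $y$'s previous sibling then has $y$ itself as a witness, and the marked $\alpha$-nodes are discharged by a purely regular check. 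Your overall architecture (Scott normal form, decomposition by positional type and data relation, closure under intersection and letter projection, same-data cases delegated to $\Ba$ as $\#$-stuttering languages) does match the paper's first step, but the proposal stops short of the case analysis that constitutes the actual proof.
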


The proof works in two steps. 
In the first step we provide a normal form for sentences of \fotwo 
that is essentially an \emsotwo formula 
whose core is a conjunction of simple formula of \fotwo. 
In a second step, 
we show that each of the conjunct can be translated into a \dad, 
and we conclude using
composition of these automata by intersection, 
see Lemma~\ref{lem:conjunction-disjunction}.

\subsection{Intermediate Normal Form} \label{app:normal-form}
We show first that every \fotwo formula $\phi$ can be transformed into an equivalent \emsotwo formula
in \emph{intermediate normal form}:
\begin{equation*}
\exists R_1\cdots \exists R_k \bigwedge_i \chi_i 
\end{equation*}
where each $\chi_i$ has one of the following forms:
\begin{eqnarray}
 & \forall x \forall y & \alpha(x) \wedge \beta(y) \wedge \delta(x, y) \to \gamma(x, y) 
\label{eq:inf-fafa}\\
 & \forall x \exists y & \alpha(x) \to (\beta(y) \wedge \delta(x, y) \wedge
 \epsilon(x, y)) 
\label{eq:inf-faex}
\end{eqnarray}
where each of $\alpha$ and $\beta$ is a \emph{type}, 
that is, a conjunction of unary predicates or their negation 
(these unary predicates are either from \A or from $R_1,\ldots,R_k$, 
\textit{i.e.}, introduced by the existentially quantified variables),
$\delta(x, y)$ is either $x \dataeq y$ or $x \not\dataeq y$,
$\gamma(x, y)$ is one of 
      $\neg \SFS(x, y)$, $\neg \SAD(x, y)$ 
      or $\neg (x \parallel y)$,
and
$\epsilon(x, y)$ is one of
$x = y$, 
$\NS(x, y)$, 
$\NS(y, x)$, 
$\PC(x, y)$,
$\PC(y, x)$,
$\SFS(x, y)$, 
$\SFS(y, x)$, 
$\SAD(x, y)$,
$\SAD(y, x)$,
$x \parallel y$ or $\mathit{false}$.

This normal form is obtained by simple syntactical manipulation 
similar to the one given in~\cite{BDMSS11} for the data words case, and detailed below. 

\paragraph{\bf Scott normal form.}
We first transform the formula $\phi$ into Scott Normal Form
obtaining an \emsotwo formula of the form:
\[
\psi = 
\exists R_1 \ldots \exists R_m\; \forall x \forall y\; \chi \wedge 
\bigwedge_{i} \forall x \exists y\; \chi_i
\]
where $\chi$ and every $\chi_i$ are quantifier free, 
and $R_1, \ldots R_m$ are new unary predicates (called monadic).
This transformation is standard: 
a new unary predicate $R_\theta$ is introduced
for each subformula $\theta(x)$ with one free variable for marking the
nodes where the subformula holds. The subformula $\theta(x)$ is then
replaced by $R_\theta(x)$ and a conjunct 
$\forall x \bigl(R_\theta(x) \leftrightarrow \theta(x)\bigr)$ is added. 
This yields a formula in the desired normal form.

\paragraph{\bf From Scott to intermediate normal form.}
We show next that every conjunct of the core of the formula $\psi$ in Scott Normal Form
can be replaced by an equivalent conjunction of formulas of the
form~\eqref{eq:inf-fafa} or~\eqref{eq:inf-faex}, possibly by adding 
new quantifications with unary predicates upfront.

\paragraph{\bf Case $\forall x \forall y\; \chi$.}
Recall that with our definition, the binary relations $\NS$, $\SFS$,
$\PC$, $\SAD$, $\parallel$ and $=$ are pairwise disjoint.
Hence we can rewrite $\forall x \forall y\; \chi$ into an equivalent \fotwo formula 
in the following form,
\[
\forall x \forall y \left(\!%
\begin{array}{rrcl}
       & x = y & \!\to\! & \psi_{=}(x,y)\\
\!\wedge\! & x \parallel y & \!\to\! & \psi_{\parallel}(x,y) \bigr)
\end{array}
\begin{array}{rrcl}
\!\wedge\! & \NS(x, y) & \!\to\! & \psi_{\!\to\!}(x,y)\\
\!\wedge\! & \PC(x, y) & \!\to\! & \psi_{\downarrow}(x,y)
\end{array}
\begin{array}{rrcl}
\!\wedge\! & \SFS(x, y) & \!\to\! & \psi_{\rightrightarrows}(x,y)\\
\!\wedge\! & \SAD(x, y) & \!\to\! & \psi_{\downdownarrows}(x,y)
\end{array}\!%
\right)
\]
where every subformula $\psi_*$ is quantifier free and only involves the predicate
$\dataeq$ together with unary predicates. 
They can be obtained from $\chi$ 
via conjunctive normal form and De Morgan's law.
The resulting formula is equivalent to the conjunction
\[
\begin{array}{cll}
       & \forall x \exists y & \bigl( x = y \wedge \psi_{=}(x,y)\bigl)\\
\wedge & \forall x \exists y & 
  \bigl( \neg\last(x) \to (\NS(x, y) \wedge \psi_{\to}(x,y)) \bigr)\\
\wedge & \forall x \exists y & 
  \bigl( \neg\leaf(x) \to (\PC(x, y) \wedge \psi_{\downarrow}(x,y)) \bigr)\\
\wedge & \forall x \forall y & \SFS(x, y) \to \psi_{\rightrightarrows}(x,y)\\
\wedge & \forall x \forall y & \SAD(x, y) \to \psi_{\downdownarrows}(x,y)\\                  
\wedge & \forall x \forall y & x \parallel y \to \psi_{\parallel}(x,y)\\                  
\end{array}
\]
where $\leaf(x)$ is a new predicate denoting the leaves of the forest and
$\last(x)$ is also a new predicate denoting nodes having no right sibling.  The
predicate $\leaf$ is specified by the following formulas, that have the desired form. 
\[
\begin{array}{ll}
\forall x \forall y & \PC(x,y) \land \leaf(x) \to \mathit{false}\\
\forall x \exists y & \lnot\leaf(x) \to \PC(x, y)
\end{array}
\]
Similar formulas specify the predicate $\last$.

\noindent
The first three conjuncts, with quantifier prefix $\forall x \exists y$, will be
treated later when dealing with the second case.

\noindent
For the next three conjuncts, putting $\neg\psi_{\rightrightarrows}$, $\neg \psi_{\downdownarrows}$, 
$\neg\psi_{\parallel}$ in disjunctive normal form 
(with an exponential blowup), 
we rewrite $\psi_{\rightrightarrows}$, $\psi_{\downdownarrows}$,
$\psi_{\parallel}$ as a conjunction of formulas of the form $\neg(\alpha(x)
\wedge \beta(y) \wedge \delta(x,y))$, 
where 
$\alpha$, $\beta$, and $\delta$ are as in (\ref{eq:inf-fafa}).
%
By distribution of conjunction over implication, and by contraposition, 
we obtain for the 3 cases an equivalent conjunction of formulas of the
following form (matching the desired form~\eqref{eq:inf-fafa}) 
\[
\begin{array}{ll}
\forall x \forall y & 
 \alpha(x) \wedge \beta(y) \wedge \delta(x,y) \to \neg \SFS(x, y)\\
\forall x \forall y & 
 \alpha(x) \wedge \beta(y) \wedge \delta(x,y) \to \neg \SAD(x, y)\\
\forall x \forall y & 
 \alpha(x) \wedge \beta(y) \wedge \delta(x,y) \to \neg (x \parallel y)\\
\end{array}
\]

\paragraph{\bf Case $\forall x \exists y\; \chi$.}
We first transform $\chi$ (with an exponential blowup) 
into an equivalent disjunction of the form
\[
\chi' = \bigvee_j \alpha_j(x) \wedge \beta_j(y) \wedge \delta_j(x,y) \wedge \epsilon_j(x,y)
\]
where $\alpha_j$, $\beta_j$, $\delta_j$ and $\epsilon_j$
are as in (\ref{eq:inf-faex}).
Next, in order to eliminate the disjunctions, we add a new
monadic second-order variables $R_{\chi, j}$, that we existentially quantify
upfront of the global formula, and transform $\forall x \exists
y\; \chi'$ into the conjunction
\[
\bigwedge_j
\forall x \exists y\; (\alpha_j(x) \wedge R_{\chi, j}(x)) \to (\beta_j(y) \wedge \delta_j(x,y) \wedge \epsilon_j(x,y))
\wedge
\forall x \exists y\; \bigl(\bigvee_j R_{\chi, j}(x)\bigr)
\]
The first conjuncts express that 
if $R_{\chi, j}(x)$ holds, then there exists a node $y$ 
such that the corresponding conjunct of $\chi'$ holds, 
and the last conjunct expresses that 
for all node $x$, at least one of the $R_{\chi, j}(x)$ must hold and can be
rewritten as $\forall x \exists y\; \bigl(\bigwedge \lnot R_{\chi, j}(x) \to
\mathit{false}\bigr)$. Now all the conjuncts are as in \eqref{eq:inf-faex} and
we are done.


\subsection{Case analysis for constructing \texorpdfstring{\dad}{DTA} from intermediate normal forms} 
We now show how to transform a formula in intermediate normal form into a \dad. 
Let \A be the initial alphabet and let $\A'$ be the new alphabet formed
by combining letters of \A with the newly quantified unary predicates $R_1,\ldots, R_k$.  
By closure of \dad under intersection and letter projection (Lemma~\ref{lem:conjunction-disjunction}), 
it is enough to construct a \dad automaton for each simple formula of the form~\eqref{eq:inf-fafa}
or~\eqref{eq:inf-faex}, accepting the data forests in $\Forests(\A'\times \D)$ satisfying the formula.

We do a case analysis depending on the atoms involved in the formula of the
form~\eqref{eq:inf-fafa} or~\eqref{eq:inf-faex}. 
For each case we construct a \dad $(\Aa,\Ba)$ 
recognizing the set of data forests satisfying the formula.  
The construction borrows several ideas from the data word case~\cite{BDMSS11}, but
some extra work is needed as the tree structure is more complicated.
In the discussion below, a node whose label satisfies the type $\alpha$ will
be called an \emph{$\alpha$-node}.
Many of the cases build on generic constructions that we described 
in the following remark.
\begin{remark}\label{remark-color-class}
  A \dad $(\Aa,\Ba)$ can be used to distinguish one 
  specific data value, by recoloring, with $\Aa$, all the nodes carrying the
  data value, and checking, with $\Ba$, the correctness of the recoloring.  We
  will then say that $(\Aa,\Ba)$ \emph{marks a data value using the new color
    $c$}.  This can be done as follows.  The transducer $\Aa$ marks
  (i.e. relabel the node by adding to its current label an extra color) a node $x$ with
  this data value with a specific new color $c'$.  At the same time it guesses
  all the nodes sharing the same data value as $x$ and marks each of them with
  a new color $c$.  Then, the forest automaton $\Ba$ checks, for every data
  value, that either none of the nodes are marked with $c$ or $c'$, or that all
  nodes not labeled with $\#$ are marked with $c$ or $c'$ and that $c'$ occurs
  exactly once in the same class forest.  Note that this defines a
  $\#$-stuttering language.  It is now clear that for the run to be accepting,
  $\Aa$ must color exactly one data value and that all the nodes carrying this
  data value must be marked with $c$ or $c'$.  The transducer $\Aa$ can then
  build on this fact for checking other properties.
\end{remark}

\noindent
A generic example of the usefulness of this remark is given below.
Once an arbitrary data value is marked with a color $c$, 
then a property of the form 
$\forall x \forall y\; \alpha(x) \wedge \beta(y) \wedge x \not\dataeq y \to \gamma(x, y)$ 
is a conjunction of 
$\forall x \forall y\; \alpha(x) \land c(x) \wedge \beta(y) \wedge \lnot c(y) \to \gamma(x,y)$ 
with 
$\forall x \forall y\; \alpha(x) \land \lnot c(x) \wedge \beta(y) \wedge x \not\dataeq y 
 \to \gamma(x, y)$. 
 The first part, 
$\forall x \forall y\; \alpha(x) \land c(x) \wedge \beta(y) \wedge \lnot c(y) 
 \to \gamma(x, y)$ 
is now a regular property and can therefore be tested by $\Aa$. 
Hence it is enough to consider the case where $x$ does not carry the marked data value. 
The same reasoning holds if two data values are marked or if the formula starts with a 
$\forall x \exists y$ quantification. 
We will use this fact implicitly in the case analysis below.

Given a data forest, a \emph{vertical path} is a set of nodes containing exactly
one leaf and all its ancestors and nothing else.  A \emph{horizontal path} is a
set of nodes containing one node together with all its siblings and nothing else.

\medskip\noindent
We start with formulas of the form~\eqref{eq:inf-fafa}. 

\paragraph{\bf Case 1:}
$\forall x \forall y\; \alpha(x) \wedge \beta(y) \wedge x \dataeq y \to \gamma(x, y)$,
where $\gamma(x, y)$ is as in (\ref{eq:inf-fafa}).
These formulas express a property of pairs of nodes with the same data value.
We have seen that those are $\#$-stuttering languages that can be tested
by the forest automaton $\Ba$ solely (\textit{i.e.}, by a \dad with $\Aa$ doing
nothing).

\paragraph{\bf Case 2:}
$\forall x \forall y\; \alpha(x) \wedge \beta(y) 
 \wedge x \not\dataeq y \to \neg \SFS(x, y)$.
This formula expresses that a data forest cannot contain 
an $\alpha$-node having a $\beta$-node with a different data value
as a sibling to its right, except if it is the next-sibling.
Let $X$ be an horizontal path in a data forest $\tree$ containing at least one
$\alpha$-node, and let $x$ be the leftmost $\alpha$-node in $X$. 
Let $d$ be the data value of $x$. 
Consider an $\alpha$-node $x'$ and a $\beta$-node $y'$ that make the formula
false within $X$, in particular we have $\SFS(x',y')$. Then, if $y'$ has a data
value different from $d$ we already have $\SFS(x,y')$ and the formula is
also false for the pair $(x,y')$.
Hence the validity of the formula within $X$ can be tested over pairs $(x',y')$ such that
either $x'$ or $y'$ has data value $d$.

\noindent
With this discussion in mind we construct $(\Aa,\Ba)$ as follows. In every horizontal path $X$
containing one $\alpha$-node, the transducer $\Aa$ identify the leftmost
occurrence $x$ of an $\alpha$-node in $X$, and marks it with a new color $c'$, 
and marks all the nodes of $X$ with the same data value as $x$ with a color $c$.  
As in Remark~\ref{remark-color-class}, the forest automaton $\Ba$ checks that the
guesses are correct, \textit{i.e.} it accepts only forests in which every horizontal
path $X$ satisfy one of the following conditions: 
$X$ contains one occurrence of the color $c'$ 
and all other nodes of $X$ not labeled with $\#$ are marked with $c$, 
or $X$ contains none of the colors $c$ and $c'$ at all. 
All these properties define regular and $\#$-stuttering languages, 
and hence can be checked by a forest automaton $\Ba$.

Assuming this, the transducer $\Aa$ rejects if there are some unmarked
$\beta$-nodes occurring as a right sibling (except for the next-sibling) of a
marked $\alpha$-node or there is an unmarked $\alpha$-node as
left sibling, except for the previous sibling, of a marked $\beta$-node.  
As explained in Remark~\ref{remark-color-class}, this is a regular property.

\paragraph{\bf Case 3:}
$\forall x \forall y\; \alpha(x) \wedge \beta(y) \wedge x \not\dataeq y \to \neg \SAD(x, y)$.
The property expressed by this formula is similar to the previous case,
replacing the right sibling relationship with the descendant relationship.

Let $X$ be a vertical path in a data forest $\tree$ containing at least one
$\alpha$-node, and let $x$ be the $\alpha$-node in $X$ the closest to
the root. Let $d$ be the data value of $x$.
Consider an $\alpha$-node $x'$ and a $\beta$-node $y'$ that make the formula
false within $X$, in particular we have $\SAD(x',y')$. Then, if $y'$ has a data
value different from $d$ we already have $\SAD(x,y')$ and the formula is
also false for the pair $(x,y')$.
Hence the validity of the formula within $X$ can be tested over pairs $(x',y')$ such that
either $x'$ or $y'$ has data value $d$.

\noindent
The construction of $(\Aa,\Ba)$ is similar to the previous case, except that different
vertical paths may share some nodes.
The transducer $\Aa$ marks all the $\alpha$-nodes 
that have no $\alpha$-node as ancestor, with a new color $c'$. 
Then, for every node $x$ marked $c'$, $\Aa$ guesses all the
nodes inside the subtree rooted at $x$ having the same data value as
$x$ and mark them with a new color $c$.
As in Remark~\ref{remark-color-class}, the forest automaton $\Ba$ checks that the guesses of
colors are correct for each vertical path (see also the previous case).

Assuming this, the transducer $\Aa$ rejects if there are an unmarked
$\beta$-node that is a descendant, but not a child, of a
marked $\alpha$-node or there is an unmarked $\alpha$-node as
an ancestor, except for the parent, of a marked $\beta$-node.  
This is a regular property that can be checked by $\Aa$ in conjunction with the marking,
following the principles of Remark~\ref{remark-color-class}. 

\paragraph{\bf Case 4:}
$\forall x \forall y\; \alpha(x) \wedge \beta(y) \wedge x \not\dataeq y \to 
\neg (x \parallel y)$.
The formula expresses that every two nodes of type respectively $\alpha$ and
$\beta$ and with different data values cannot be incomparable.
Recall that two nodes are incomparable if they are not ancestors and not siblings.

\paragraph{\bf Subcase 4.1:} There exist two $\alpha$-nodes that are incomparable. 

Let $x_1$ and $x_2$ be two incomparable $\alpha$-nodes and let $z$ 
be their least common ancestor
(see Figure~\ref{fig:case41a}).
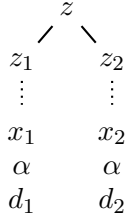
\begin{figure}
\[
\begin{tikzpicture} [-,thick]
\tikzstyle{level 1}=[level distance=7mm,sibling distance=12mm]
\tikzstyle{level 2}=[level distance=14mm,sibling distance=5mm]
\node {$z$}
  child { node {$z_1$} 
	  child[dotted] { node {$\begin{array}{c} x_1\\ \alpha\\ d_1 \end{array}$} } } 
  child { node {$z_2$} 
	  child[dotted] { node {$\begin{array}{c} x_2\\ \alpha\\ d_2 \end{array}$} } }; 
\end{tikzpicture}
\]
\caption{Subcase 4.1 in the proof of Theorem~\ref{th-fotwodad}.}
\label{fig:case41a}
\end{figure}

We can choose $x_1$ and $x_2$ such that none of the
$\alpha$-nodes are incomparable with $z$ or sibling of~$z$, because if this was not
the case then there is an $\alpha$-node $x_3$ incomparable with $z$ or sibling of~$z$, 
and therefore $x_3$ is incomparable with $x_1$, and we can replace $x_2$ with
$x_3$, continuing with their least common ancestor, 
a node which is strictly higher than $z$. 
Let $z_1$ and $z_2$ be the children of $z$ 
that are respectively ancestors of $x_1$ and $x_2$. 
Note that by construction, $z_1\neq z_2$. 
If $x_1=z_1$ and there is an $\alpha$-node $x_3$ in the subtree of $z$, 
different from $x_1$ and incomparable with $z_2$, 
then we replace $x_1$ by $x_3$ and proceed. 
In other words we ensure that if $x_1 = z_1$ 
then there is no $\alpha$-node incomparable with $z_2$ in the subtree of $z$. 
We proceed similarly
to enforce that if $x_2=z_2$ then there is no
$\alpha$-node incomparable with $z_1$ in the subtree of $z$. 
Notice that we cannot have at the same time $x_1 = z_1$ and $x_2 = z_2$
because we assumed $x_1$ and $x_2$ to be incomparable. 
All these properties can be specified in $\mso(<,+1)$ and therefore can be tested 
by a forest automaton.  
Let $d_1$ and $d_2$ be the respective data values of $x_1$ and $x_2$ (possibly $d_1=d_2$).

Consider now a $\beta$-node $y$ whose data value is neither $d_1$ nor $d_2$.
If $y$ is incomparable with $z$ or sibling of $z$, then the formula cannot be true as
it is contradicted by $(x_1,y)$.  If $y$ is an ancestor of $z$ then, as no
$\alpha$-node is incomparable with $z$, none is incomparable with $y$. 
Hence the formula can only be true with such $y$.
Assume now that $y$ is inside the subtree of~$z$. 
If $y = z_1$ and $x_2 \neq z_2$, then the formula is contradicted by $(x_2, y)$.
If $y = z_1$ and $x_2 = z_2$, then, by hypothesis, 
there is no $\alpha$-node incomparable with $y$ in the subtree of $z$, 
and there is no $\alpha$-node incomparable with $y$ outside the subtree of $z$,
hence altogether, the formula holds for $y$.
If $y \neq z_1$ and $y$ is a descendant of $z_1$, 
then the formula is contradicted by $(x_2, y)$.
The cases where $y$ is descendant of $z_2$ are symmetric:
in this case, the formula can only be true if $y = z_2$ and $x_1 = z_1$.
In the remaining cases $y$ is in the subtree of $z$ and not in the subtrees of
$z_1$ and $z_2$, making the formula false. Indeed, in each of these cases, either
$(x_1, y)$ or $(x_2, y)$ contradicts the formula.
To summarize, the only cases making the formula true
are when $y$ is an ancestor of $z$, 
or $y = z_1 \wedge x_2 = z_2$,
or $y = z_2 \wedge x_1 = z_1$.

With this discussion in mind, this case can be solved as follows:
The transducer $\Aa$ guesses the nodes of $x_1$, $x_2$, $z_1$, $z_2$ and $z$
and checks that they satisfy the appropriate properties. 
Moreover, $\Aa$ guesses whether $d_1=d_2$ 
and marks the data values of $x_1$ and $x_2$ accordingly
with one or two new colors. 
The forest automaton $\Ba$ will then check that the data values are marked
appropriately as in Remark~\ref{remark-color-class}.

Moreover $\Aa$ checks that for all marked $\beta$-nodes there is no
$\alpha$-node incomparable with it and with a different data value, a regular
property as explained in Remark~\ref{remark-color-class}.
It now remains for $\Aa$ to check that every unmarked $\beta$-node~$y$ behaves
according to the discussion above: 
$y$ is an ancestor of $z$ or $y = z_1$ and $x_2 = z_2$ or $y = z_2$ and $x_1 = z_1$.
This is a regular property testable by $\Aa$.

\paragraph{\bf Subcase 4.2:} There are no two incomparable $\alpha$-nodes. 

Let $x$ be an $\alpha$-node such that no $\alpha$-node is a descendant of $x$. 
By hypothesis, all $\alpha$-nodes are either ancestors or siblings of $x$. 
Let $d$ be the data value of $x$.
We distinguish between several subcases depending on whether 
there are other $\alpha$-nodes that are siblings of $x$ or not.

If there is an $\alpha$-node $x'$ that is a sibling of $x$, then let $d'$
be its data value (possibly $d=d'$).  Consider
now a $\beta$-node $y$ whose data value is neither $d$ nor $d'$.  
Then, in order to make the formula true, $y$ must be an ancestor or a sibling of $x$.

In this case, the transducer $\Aa$ guesses the nodes $x$ and $x'$ and marks the
corresponding data values with one or two new colors (according to whether $d = d'$
or not).  The forest automaton $\Ba$ will then check that the data values are marked
correctly as explained in Remark~\ref{remark-color-class}.
For the marked $\beta$-nodes, the property expressed by the formula
is regular and can also be checked by $\Aa$.
It remains for $\Aa$ to check that every unmarked $\beta$-nodes is 
either an ancestor of $x$ or a sibling of $x$.

Now, if there are no $\alpha$-nodes that are sibling of $x$,
and $y$ is a $\beta$-node whose data value is not $d$, 
then in order to make the formal true, $y$ cannot be incomparable with $x$, 
and therefore, $y$ can be an ancestor, a descendant or a sibling of $x$.

In this second case, the transducer $\Aa$ guesses the node $x$, 
marks its data value using a new color. 
The forest automaton $\Ba$ will then check that the data values were marked
correctly as explained in Remark~\ref{remark-color-class}.
The transducer $\Aa$ checks that all marked $\beta$-nodes make the formula true
(a regular property),
and that all unmarked $\beta$-nodes are not incomparable with $x$.

\medskip

We now turn to formulas of the form \eqref{eq:inf-faex}.

\paragraph{\bf Case 5:} $\forall x \exists y\; \alpha(x) \to (\beta(y) \wedge x \dataeq y
\wedge \epsilon(x, y))$, where $\epsilon(x, y)$ is as in \eqref{eq:inf-faex}.
These formulas express properties of nodes with the same data value. 
Moreover they express a regular property over all $\tree[d]$. 
Therefore can be treated by the forest automaton $\Ba$ as for the case 1.

\paragraph{\bf Case 6:}
$\forall x \exists y\; \alpha(x) \to (\beta(y) \wedge x \not\dataeq y \wedge \NS(x, y))$.
This formula expresses that every $\alpha$-node has a next sibling of type $\beta$
with a different data value.
The transducer $\Aa$ marks every $\alpha$-node $x$,
with a new color $c$ and checks that the next-sibling of $x$ is a $\beta$-node.
The forest automaton $\Ba$ accepts only the forests such that
for every node marked with $c$, its right sibling is labeled with~$\#$.

\paragraph{\bf Cases 7, 8, 9:}
The formulae of form 
$\forall x \exists y\; \alpha(x) \to (\beta(y) \wedge x \not\dataeq y 
 \wedge \epsilon(x, y))$
where $\epsilon(x, y)$ is one of $\NS(y, x)$, $\PC(x, y)$, $\PC(y, x)$ 
are treated similarly.

\paragraph{\bf Case 10:}
$\forall x \exists y\; \alpha(x) \to (\beta(y) \wedge x \not\dataeq y \wedge \SFS(x, y))$.
This formula expresses that every $\alpha$-node 
must have a $\beta$-node as a right sibling, but not as its next-sibling,
and with a different data value.

Let $X$ be an horizontal path. Let $y$ be the rightmost $\beta$-node of $X$ 
and $d$ be its data value. 
Consider now an $\alpha$-node $x$ of $X$ with a data value different from $d$. 
Then either $x$ is at the left of the previous-sibling of $y$, 
and $y$ can serve as the desired witness, 
or $x$ has no witness and the formula is false.

The transducer $\Aa$, for each horizontal path $X$ containing an
$\alpha$-node, marks its rightmost $\beta$-node $y$ with a new color $c'$,
guesses all the nodes of $X$ with the same data value as $y$ and marks them
with a new color $c$. 
Then it checks that every unmarked $\alpha$-node of $X$
occurs at the left of the previous-sibling of $y$. 
The forest automaton $\Ba$ checks that the guesses are correct 
as in Remark~\ref{remark-color-class}: 
for each horizontal paths, either all elements are marked with $c$ or $c'$, or none.

\paragraph{\bf Cases 11, 12, 13:}
The constructions for the formulae
$\forall x \exists y\; \alpha(x) \to (\beta(y) \wedge x \not\dataeq y 
 \wedge \epsilon(x, y))$
where $\epsilon(x, y)$ is one of $\SFS(y, x))$, $\SAD(x, y))$, and $\SAD(y, x))$ 
are similar.

\paragraph{\bf Case 14:}
$\forall x \exists y\; \alpha(x) \to (\beta(y) \wedge x \not\dataeq y \wedge x \parallel y)$.
This formula expresses that every $\alpha$-node 
must have a incomparable $\beta$-node with a different data value.

\paragraph{\bf Subcase 14.1:} 
There exist two $\beta$-nodes that are incomparable.

Let $y_1$ and $y_2$ be two incomparable $\beta$-nodes and let $z$ be their least
common ancestor. Using the same reasoning as in subcase 4.1, we can choose
$y_1$ and $y_2$ such that none of the $\beta$-nodes is incomparable with $z$ or a
sibling of $z$.  Let $z_1$ and $z_2$ be the children of $z$ that are the
ancestors of $y_1$ and $y_2$ respectively.  By construction, $z_1\neq z_2$.
Using the same trick as in subcase 4.1,
we can ensure that if $y_1 = z_1$ then there is no $\beta$-node
incomparable with $z_2$, and
if $y_2 = z_2$ then there is no $\beta$-node incomparable with $z_1$. 
Moreover, we cannot have at the same time $y_1=z_1$ and $y_2=z_2$. 
Recall that all these properties can be tested by a forest automaton. 
Let $d_1$ and $d_2$ be the respective data values of $y_1$ and $y_2$ (possibly $d_1=d_2$). 

Consider now an $\alpha$-node $x$ whose data value is neither $d_1$ nor $d_2$. 
If $x$ is incomparable with $z$ or a sibling of $z$, then $y_1$ is a witness for $x$. 
If $x$ is an ancestor of $z$ then by hypothesis there is no $\beta$-node incomparable 
with $x$ and hence the formula is false. 
Assume now that $x$ is in the subtree rooted at $z$. 
If $x = z_1$ and $y_2 \neq z_2$, then $y_2$ is a $\beta$-node
incomparable with $x$ with a different data value, hence a witness for $x$ in the formula.
If $x = z_1$ and $y_2 = z_2$, 
then by hypothesis, there is no $\beta$-node incomparable with $x$
in the subtree of $z$, and since there are neither $\beta$-nodes incomparable with $x$
outside the subtree of $z$, the formula must be false.
If $x \neq z_1$ and $x$ is a descendant of $z_1$, 
then $y_2$ is a witness for $x$.
The cases where $x$ is a descendant of $z_2$ are symmetric.
In the remaining cases, $x$ is in the subtree of $z$ 
and not a descendant of $z_1$ or $z_2$.
In each of these cases, either $y_1$ or $y_2$ is a witness for $x$.

With this discussion in mind, this case can be solved as follows:
The transducer $\Aa$ guesses the nodes of $y_1$, $y_2$, $z_1$, $z_2$ and $z$ 
and checks that they satisfy the appropriate properties. 
Moreover, $\Aa$ guesses whether $d_1=d_2$ and marks accordingly the data values 
of $z_1$ and $z_2$ with one or two new colors. 
The forest automaton $\Ba$ will then check that the data values are marked appropriately, 
as in Remark~\ref{remark-color-class}.
Moreover $\Aa$ checks that for every marked $\alpha$-node, 
there exists a $\beta$-node making the formula true. 
It remains for $\Aa$ to check the three following properties: 
no unmarked $\alpha$-node occurs above $z$,
if $y_1=z_1$ then $z_2$ is not an unmarked $\alpha$-node,
and if $y_2=z_2$ then $z_1$ is not an unmarked $\alpha$-node.

\paragraph{\bf Subcase 14.2:} There are no two $\beta$-nodes that are incomparable. 

Let $y$ be an $\beta$-node such that no $\beta$-node is
a descendant of $y$. By hypothesis, all $\beta$-nodes are either ancestors
or siblings of $y$. Let $d$ be the data value of $y$.
We distinguish between several subcases depending on whether there are $\beta$-nodes
that are siblings of $y$ or not.

If there exists a $\beta$-node $y'$ that is a sibling of $y$, 
let $d'$ be its data value (possibly $d = d'$).
Consider an $\alpha$-node $x$ whose data value is neither $d$ nor $d'$.
If $x$ is incomparable with $y$, then $y$ is a witness for $x$. 
If $x$ is an ancestor or a sibling of $y$, 
then the formula cannot be true, because by hypothesis
every $\beta$-node cannot be incomparable with $x$.
If $x$ is 
a descendant of $y$, then $y'$ makes the formula true for that $x$.

Consider now the case where there are no $\beta$-node that are sibling of $y$.
Note that $y$ can have $\beta$-nodes among its ancestors.
Let $x$ be a $\alpha$-node that has data value different from $d$. 
If $x$ is not incomparable with $y$ then the formula must be false. 
Otherwise, $y$ is a witness for $x$.

The transducer $\Aa$ guesses the $\beta$-node $y$ and marks its data value using a new color. 
Then it checks whether there is an $\beta$-node $y'$ that is a sibling of $y$.
If yes, it guesses whether the value at $y'$ is the same as the value at $y$ or not,
and marks the data value of $y'$ using a new color.
The forest automaton $\Ba$ will then check that the data values are marked appropriately.
For marked $\alpha$-nodes, $\Aa$ checks the regular property making the formula true.
It now remains for $\Aa$ to check, in both cases, 
that every unmarked $\alpha$-node $x$
satisfy the appropriate condition described above,
\textit{i.e.},
 that $x$ is incomparable with $y$ or a descendant of $y$ 
 if there exists a sibling $y'$ 
 and that $x$ is incomparable with $y$ otherwise.

\paragraph{\bf Case 15:}
$\forall x \exists y\; \alpha(x) \to \mathit{false}$. 
It is sufficient to test with $\Aa$ that no $\alpha$-node is present in the forest.

\section{From \texorpdfstring{\dad}{DTA} to \texorpdfstring{\ebvass}{EBVASS}}\label{sec-dad-counter}

In this section we show that the emptiness problem of \dad can be reduced to
the reachability of a counter tree automata model that extends \bvass, denoted~\ebvass. 
An \ebvass is a tree automaton equipped with counters.  It runs on
binary trees over a finite alphabet.  It can increase or decrease its
counters but cannot perform a zero test.  For \bvass, when going up in the
tree, the new value of each counter is the sum of its values at the left and
right child. An \ebvass can change this behavior using simple arithmetical
constraints.

The general idea of the reduction is as follows. 
\label{sketch:sec-dad-counter}
Let $(\Aa,\Ba)$ be a \dad. 
We want to construct
an automaton that recognizes exactly the projections of the data forests
accepted by $(\Aa,\Ba)$. Because this automaton does not have access to the data
values, the main difficulty is to simulate the runs of $\Ba$ on all class
forests. We will use counters for this purpose. The automaton will maintain the
following invariant: At any node~$x$ of the forest, for each state~$q$ of
$\Ba$, we have a counter that stores the number of data values~$d$ such that
$\Ba$ is in state $q$ at $x$ when running on the class forest associated to
$d$. In order to maintain this invariant we make sure that the  automata
model has the appropriate features for manipulating the counters. In
particular, moving up in the tree, in order to simulate $\Ba$, 
the automaton has to decide which data value occurring 
in the left subtree also appears in the right subtree.  
At the current node, each data value is associated to a state of $\Ba$ 
and, by the invariant property, a counter. In order to maintain the invariant for data
values occurring in both subtrees, for each pair $q,q'$ of states of $\Ba$, the
automaton guesses a number~$n$ (the number of data values being at the same
time in state $q$ in the left subtree and in state $q'$ in the right subtree),
removes $n$ from both associated counters and adds $n$ to the counter
corresponding to the state resulting from the application of the transition function of
$\Ba$ on $(q,q')$. This preserves the invariant property and a BVASS cannot do it, so we explicitly add this feature to our model. Once we
have done this, the counters from both sides are added like a BVASS would do. 
The $\#$-stuttering property of the language of $\Ba$ will ensure that this last operation
is consistent with the behavior of $\Ba$.
This is essentially what we do. But of course there are some nasty details. In
particular \dad run over unranked forests while \ebvass run over binary trees.

We start by defining the counter tree automata model and then we present
the reduction.

\subsection{Definition of \ebvass} 

An \ebvass is a tuple $(Q,\A,q_0,k,\delta,\chi)$ 
where \A is a finite alphabet, $Q$ is a finite set of states, $q_0 \in Q$ is the initial
state, $k\in\N$ is the number of counters, $\chi$ is a finite set of constraints of
the form $\merc{i_1}{i_2}{i} $with $1\leq i_1,i_2,i\leq k$, and $\delta$ is the set of transitions which are
of two kinds: $\epsilon$-transitions (subset denoted $\delta_\epsilon$) and
up-transitions (subset denoted $\delta_\text{u}$).

An $\epsilon$-transition is an element of $(Q\times\A) \times (Q\times U)$
where $U=\set{I_i,D_i~:~1\leq i \leq k}$ is the set of possible counter
updates: $D_i$ stands for \emph{decrement counter $i$} and $I_i$ stands for
\emph{increment counter $i$}. We view each element of $U$ as a vector over
$\set{-1,0,1}^k$ with only one non-zero position.
An up-transition is an element of $(Q\times\A\times Q) \times Q$.

Informally, an $\epsilon$-transition may change the current state and increment
or decrement one of the counters. An up-transition depends on the label of the
current node and, when the current node is an inner node, on the states reached
at its left and right child.  It defines a new state and the new value of each
counter is the sum of the values of the corresponding counters of the children.
Moreover, the behavior of up-transitions can be modified by the constraints
$\chi$. Informally a constraint of the form $\merc{i_1}{i_2}{i}$ modifies this
process as follows: before performing the addition of the counters, 
two positive numbers
$n_1$ and $n_2$ are guessed 
(possibly of value $0$), the counter $i_1$ of the left child and the counter
$i_2$ of the right child are decreased by $n_1$, the counter~$i_2$ of the left
child and the counter $i_1$ of the right child are decreased by~$n_2$ and, once
the addition of the counters has been executed, the counter~$i$ is increased by
$n_1 + n_2$. Note that $n_1$ and $n_2$ must be so that all intermediate values
remain positive.  This is essentially what is explained in the sketch above
except that we cannot distinguish the left child from the right child. This
will be a property resulting from $\#$-stuttering languages when coding them into binary
trees.
We now make this more precise.

A \emph{configuration} of an \ebvass is a pair $(q,v)$ where $q\in Q$ 
and $v$ is a valuation of the counters, seen as a vector of $\N^k$. 
The initial configuration is $(q_0,v_0)$ where $v_0$ is the function setting all counters to~$0$.
There is an $\epsilon$-\emph{transition} of label $a$ from $(q,v)$ to $(q',v')$ if
$(q,a,q',u) \in \delta_\epsilon$ and $v'=v+u$ 
(in particular this implies that $v+u\geq 0$). 
We write $(q,v) \xrightarrow{a}_\epsilon (q',v')$, if $(q',v')$ can be reached from
$(q,v)$ via a finite sequence of $\epsilon$-transitions of label $a$.
 
Given a binary tree $\atree\in\Trees(\A)$, a \emph{run} $\rho$ of a \ebvass is a
function from nodes of $\atree$ to configurations verifying for all leaf $x$,
$\rho(x)=(q_0,v_0)$ and for all nodes $x,x_1,x_2$ of $\atree$ with $x_1$ and
$x_2$ the left and right child of $x$, and $\rho(x)=(q,v),
\rho(x_1)=(q_1,v_1), \rho(x_2)=(q_2,v_2)$ there exist $(q'_1,v'_1), (q'_2,v'_2)$
such that:

\begin{enumerate}
\item $(q_1,v_1) \xrightarrow{\atree(x_1)}_\epsilon (q'_1,v'_1)$, 
$(q_2,v_2) \xrightarrow{\atree(x_2)}_\epsilon (q'_2,v'_2)$,
\item $(q'_1,\atree(x),q'_2,\; q) \in \delta_{\text{u}}$,
\item for each constraint $\theta \in \chi$ of the form $\merc{i_1}{i_2}{i}$ 
there are two numbers $n^1_\theta$ and $n^2_\theta$ (they may be $0$)
and vectors $u_{\theta,1},u_{\theta,2},u_\theta \in\N^k$, having
$n^1_\theta$ and $n^2_\theta$ at positions $i_1,i_2$ for $u_{\theta,1}$, having
$n^2_\theta$ and $n^1_\theta$ at positions $i_1,i_2$ for $u_{\theta,2}$,
$n^1_\theta+n^2_\theta$ at position $i$ for $u_\theta$
and all other positions set to zero,
\item \label{item:ebvass-decrement}
      $v''_1 = v'_1 - \displaystyle\sum_{\theta\in\chi} u_{\theta,1} \geq 0$, and 
      $v''_2 = v'_2 - \displaystyle\sum_{\theta\in\chi} u_{\theta,2} \geq 0$, and
 $v = v''_1 + v''_2+ \displaystyle\sum_{\theta\in\chi} u_\theta$. 
\end{enumerate}
\noindent
We stress that it will be important for coding the automata into the logic
(Section~\ref{sec-counter-fotwo}) that $\chi$ is independent of the current state of the automaton.

Without the constraints of $\chi$ we have the usual notion of
\bvass~\cite{acl10}. 
It does not seem possible in general to simulate directly a constraint
$\merc{i_1}{i_2}{i}$ with BVASS transitions.  One could imagine using an
arbitrary number of $\epsilon$-transitions decreasing the counters $i_1$ and
$i_2$ while increasing counter $i$, after the merging operation summing up the
counters.  However, it is not clear how to do this while preserving the
positiveness of the corresponding decrements before the merge
(Step~\ref{item:ebvass-decrement} above).

\medskip

\noindent
The \emph{reachability} problem for an \ebvass, on input $q\in Q$, 
asks whether there is a tree and a run on that tree reaching 
the configuration $(q,v_0)$ at its root.

\subsection{Reduction from \texorpdfstring{\dad}{DTA} to \texorpdfstring{\ebvass}{EBVASS}}

\begin{theorem}\label{prop-reduct}
  The emptiness problem for \dad reduces to the reachability problem for
  \ebvass.
\end{theorem}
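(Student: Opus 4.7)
The plan is to implement the idea from the sketch preceding the theorem. Let $(\Aa, \Ba)$ be a \dad. First I encode unranked forests in $\Forests(\B_\#)$ as binary trees via the standard first-child/next-sibling encoding, so that $\Ba$ becomes a binary tree automaton $\Ba'$ whose transitions at a node refer to the states reached at its two children in the encoding. The letter-to-letter transducer $\Aa$, being a regular device, can be absorbed into the finite control of the target \ebvass $\Ca$: while $\Ca$ traverses the tree encoding an input $\atree \in \Forests(\A)$, it guesses nondeterministically at each position a letter of $\btree$ that is a valid output of $\Aa$ on $\atree$. The real work is then to verify, using the counters of $\Ca$, that for every data value $d$ the class forest of $\btree \otimes \dtree$ is accepted by $\Ba'$; crucially the data values themselves are not observed, only their multiplicities matter.

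The invariant I want $\Ca$ to maintain at every node $x$ is: for each state $q$ of $\Ba'$, the counter $c_q$ stores the number of data values $d$ such that $\Ba'$ reaches state $q$ at $x$ in its run on the class forest associated with $d$. At the leaves of the binary encoding all counters are zero. At an internal node $x$ with guessed $\B$-label $b$, each relevant data value $d$ falls into one of three cases: either $d$ is fresh at $x$ (so in the class forest of $d$, $x$ itself carries a non-$\#$ label while $d$ appears nowhere below), or $d$ appears strictly below on exactly one side of $x$, or $d$ appears strictly below on both sides. The first case is handled by $\epsilon$-transitions of $\Ca$ that increment the counter $c_q$ for the appropriate state $q$ dictated by $\Ba'$ reading $b$ on two $\#$-children. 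The second case is handled, before the up-transition, by $\epsilon$-transitions that shift counter units from the state reached on the present side to the state obtained after $\Ba'$ reads $\#$ on the absent side; the BVASS summation then automatically propagates the result.

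The third case is handled using the merge constraints of \ebvass: for each triple $(q_1, q_2, q)$ such that $\Ba'$ has a transition combining children-states $(q_1, q_2)$ at label $b$ into $q$, $\Ca$ applies $\merc{q_1}{q_2}{q}$ together with its symmetric counterpart. By definition this nondeterministically guesses numbers $n_1, n_2 \in \N$ of data values simultaneously in state $q_1$ on one side and $q_2$ on the other, decrements the corresponding counters of the children, and after the BVASS sum increments $c_q$ by $n_1 + n_2$. Acceptance is enforced by a final sequence of $\epsilon$-transitions at the root that empties every counter corresponding to an accepting state of $\Ba'$, after which $\Ca$ moves to a designated target state, so that emptiness of $(\Aa, \Ba)$ coincides with reachability of that state in $\Ca$.

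The main obstacle is proving the faithfulness of the simulation in both directions. The forward direction is straightforward: from a data forest accepted by $(\Aa, \Ba)$ one constructs the run of $\Ca$ directly, reading the values $n_1, n_2$ at each node from the actual data. The converse is the delicate point: given a successful run of $\Ca$, one must exhibit an actual assignment of data values realizing the guessed multiplicities consistently throughout the tree, and not merely a local coherence at every node. This is exactly where the $\#$-stuttering hypothesis on the language of $\Ba$ becomes essential: because $\Ba'$ cannot count $\#$-labeled nodes, the choice of how to route a data value through subtrees where it is absent is irrelevant, so the counters carry exactly the information needed. I expect to prove the converse by induction on the binary encoding, building at each node a bijection between counter units and actual data values occurring in the subtree below, using the merge choices $n_1, n_2$ to glue the left and right bijections, and using $\#$-stuttering to absorb the mismatches produced by data values present on only one side.
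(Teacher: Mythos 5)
Your high-level plan coincides with the paper's: \fcns-encode forests into binary trees, absorb $\Aa$ into the finite control, maintain the invariant that counter $c_q$ counts the data values whose class-forest run reaches $q$, use the merge constraints for data values shared between the two subtrees, and invoke $\#$-stuttering to rebuild a data assignment from a run. However, the construction as you describe it has a concrete flaw in where the merge constraints come from. You derive $\merc{q_1}{q_2}{q}$ from transitions of $\Ba'$ ``at label $b$,'' the guessed output letter of the current node. But for every data value $d$ that occurs in both subtrees \emph{other than the data value of the current node itself}, the class forest of $d$ carries $\#$ at the current node, so the transition that must be simulated is $(q_1,\#,q_2)\to q$ with $q_1,q_2,q$ all $\#$-states. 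This is not a cosmetic point: the constraint set $\chi$ of an \ebvass is a single global set, independent of the current label and state, and every constraint is available (with $n_1,n_2\geq 0$) at every up-transition. Constraints manufactured from $b$-transitions would therefore be applicable at nodes whose class-forest label is $\#$ (and at nodes labelled $b'\neq b$), which destroys soundness of the ``reachability implies nonemptiness'' direction. Moreover the constraint semantics is forcibly symmetric (it decrements $i_1$ left/$i_2$ right by $n_1$ \emph{and} $i_2$ left/$i_1$ right by $n_2$), which is only justified by the commutativity clause in the normalisation of $\#$-stuttering automata for $\#$-transitions among $\#$-states; non-$\#$ transitions of $\Ba'$ need not commute.

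The second, larger gap is that your trichotomy (fresh / one side / both sides) treats all data values generically, whereas three of them are not generic: the data value of the current node and the data values of its two children. Their class forests read non-$\#$ labels at those positions, their runs pass through non-$\#$-states, and the transition to be applied at the current node is label- and side-dependent; none of this can be delegated to the (label-blind, symmetric, $\#$-state-only) constraints. Since each of these values has multiplicity one, they must instead be threaded through the finite control: one needs the additional invariant that exactly one non-$\#$-state counter equals $1$ (the ``root-state'' of the subtree), and an explicit case analysis on whether the current node's value equals a child's value, whether it reappears in either subtree, and whether each child's value reappears in the other subtree, each case contributing its own $\epsilon$- and up-transitions that decrement the old root-state counters and increment the counters of the states produced by the corresponding $b$- and $\#$-transitions of $\Ba'$. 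This case analysis is the bulk of the paper's proof (some ten cases plus symmetries) and is entirely absent from your proposal; without it the invariant cannot be maintained across an up-step, so neither direction of the correctness argument goes through.
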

\begin{proof}
We first take care of the binary trees versus unranked forest issue.
It is well known that forests of $\Forests(\E)$
can be transformed into binary trees in $\Trees(\E_\#)$ 
using the first-child/right-sibling encoding, denoted by $\fcns$,
and formally defined as follows (for $a \in \E$ and $\stree, \stree' \in \Forests(\E)$):
\[
\begin{array}{rcl}
\fcns(a) & = & a(\# + \#)\\
\fcns(a(\stree)) & = & a( \fcns(\stree) + \#)\\
\fcns(a + \stree) & = & a(\# + \fcns(\stree))\\
\fcns(a(\stree) + \stree') & = & a( \fcns(\stree) + \fcns(\stree')).
\end{array}
\]

This transformation effectively preserves regularity: 
for each automaton $\Ba$ computing on $\Forests(\E)$ 
there exists an automaton $\Ba'$ on binary trees of $\Trees(\E_\#)$,
effectively computable from $\Ba$, 
recognizing exactly the $\fcns$ encoding of the forests recognized by $\Ba$.  
This automaton $\Ba'$ is called the $\fcns$ \emph{view} of~$\Ba$.  
Note that we use the same $\#$ symbol in the \fcns construction and 
for class forests. This simplifies the technical details of the proof. In
particular we can assume that our tree automata start with a single initial
state at the leaves of the tree.

\medskip

  We show that given a \dad $\Da$, one can construct an \ebvass $\Ea$ with a
  distinguished state $q$ such that for all $\atree \in \Forests(\A)$, there is
  a run of $\Ea$ on $\fcns(\atree)$ reaching $(q, v_0)$ at its root iff
  $\atree\otimes\dtree$ is accepted by $\Da$ for some~$\dtree$.

\medskip

Before explaining the construction of $\Ea$ we first show the consequences of
the fact that the second component of $\Da$ 
recognizes a $\#$-stuttering language on its \fcns view $\Ba$.
The \fcns view of the rules of Figure~\ref{fig-rules} are depicted in
Figure~\ref{fig-rules-fcns}: One obtains the same result by application of \fcns and then of a rule of Figure~\ref{fig-rules-fcns} 
than by application of the corresponding rule of Figure~\ref{fig-rules} and then of \fcns. 
This can be enforced using the following syntactic restrictions on the \fcns view $\Ba$ that will be useful in our proofs.
In the definition of these restrictions, 
we use the notation $(p_1,b,p_2) \ra p$ 
for a transition of $\Ba$ from the states $p_1$, $p_2$ 
in the left and right child of a node of label $b$, moving up with state $p$.

We assume without loss of generality that the states of $\Ba$ permit to
distinguish the last symbol read by $\Ba$.  More precisely, we assume that the
set of states of $\Ba$ is split into two kinds: the $\#$-states and the
non-$\#$-states.  The states of the first kind are reached by $\Ba$ on nodes
labeled with symbol $\#$, while the states of the second kind are reached by
$\Ba$ on nodes with label in $\B$.  We say that $\Ba$ is
\emph{$\#$-stuttering} if $\Ba$ is deterministic and has a specific $\#$-state
$p_\#$ that it must reach on all leaves of label $\#$, and verifies the
following properties:
\begin{enumerate} 
\item if a transition rule of the form $(p_1, \#, p_\#) \to p_2$ 
is applied at a $\#$-node that is the left-child of another $\#$-node, then $p_1=p_2$
\item if a transition rule of the form $(p_\#, \#, p_1) \to p_2$ is applied at a $\#$-node that is the right-child of another $\#$-node, then $p_1=p_2$
\item all transition rules of the form $(p_\#, \#, p_1) \to p_2$ with $p_1$ a $\#$-state
  verify $p_1=p_2$.
\item \label{it-commutativity}
 all transition rules of the form $(p_1, \#, p_2) \to p$ are ``commutative'',
  \textit{i.e.}, $(p_2, \#, p_1) \to p$ must then also be a rule.
\end{enumerate}

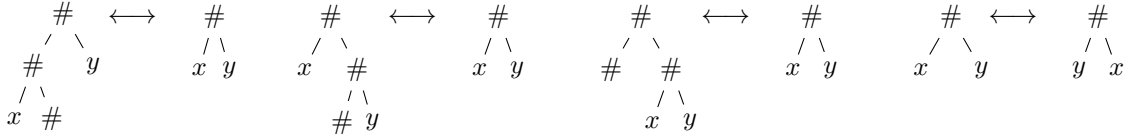
\begin{figure}
\small
\[
\begin{array}{rclcrclcrclrclc}
\begin{minipage}[t]{7mm}
\vspace{-4mm}
\begin{tikzpicture}
  [level distance=7mm,
   level 1/.style={sibling distance=8mm},
   level 2/.style={sibling distance=5mm}]
\node {$\#$}
 child { node {$\#$}
         child { node {$x$} }
         child { node {$\#$} }
       }
 child { node {$y$} };
\end{tikzpicture}
\end{minipage}
&
\vspace{-0mm}\,\,\,\,\,\,\,\toto
&
\begin{minipage}[t]{7mm}
\vspace{-3.5mm}
\begin{tikzpicture}
  [level distance=7mm,
   level 1/.style={sibling distance=4mm}]
\node {$\#$}
 child { node {$x$} }
 child { node {$y$} };
\end{tikzpicture}
\end{minipage}
& & 
\begin{minipage}[t]{8mm}
\vspace{-3.5mm}
\begin{tikzpicture}
  [level distance=7mm,
   level 1/.style={sibling distance=7mm},
   level 2/.style={sibling distance=4mm}]
\node {$\#$}
 child { node {$x$} }
 child { node {$\#$}
         child { node {$\#$} }
         child { node {$y$} }
       };
\end{tikzpicture}
\end{minipage}
&
\vspace{-0mm}\,\,\,\toto
&
\begin{minipage}[t]{10mm}
\vspace{-3.5mm}
\begin{tikzpicture}
  [level distance=7mm,
   level 1/.style={sibling distance=5mm}]
\node {$\#$}
 child { node {$x$} }
 child { node {$y$} };
\end{tikzpicture}
\end{minipage}
& & 
\begin{minipage}[t]{10mm}
\vspace{-3.5mm}
\begin{tikzpicture}
  [level distance=7mm,
   level 1/.style={sibling distance=8mm},
   level 2/.style={sibling distance=5mm}]
\node {$\#$}
 child { node {$\#$} }
 child { node {$\#$}
         child { node {$x$} }
         child { node {$y$} }
       };
\end{tikzpicture}
\end{minipage}
&
\vspace{-0mm}\,\,\toto
&
\begin{minipage}[t]{10mm}
\vspace{-3.5mm}
\begin{tikzpicture}
  [level distance=7mm,
   level 1/.style={sibling distance=5mm}]
\node {$\#$}
 child { node {$x$} }
 child { node {$y$} };
\end{tikzpicture}
\end{minipage}
& & 
\begin{minipage}[t]{10mm}
\vspace{-3.5mm}
\begin{tikzpicture}
  [level distance=7mm,
   level 1/.style={sibling distance=8mm},
   level 2/.style={sibling distance=5mm}]
\node {$\#$}
 child { node {$x$} }
 child { node {$y$}};
\end{tikzpicture}
\end{minipage}
&
\vspace{-0mm}\!\!\!\!\toto
&
\begin{minipage}[t]{10mm}
\vspace{-3.5mm}
\begin{tikzpicture}
  [level distance=7mm,
   level 1/.style={sibling distance=5mm}]
\node {$\#$}
 child { node {$y$} }
 child { node {$x$} };
\end{tikzpicture}
\end{minipage}
\end{array}
\]
\caption{\fcns view of the $\#$-stuttering closure rules. $x$ and $y$ are
  arbitrary binary trees.}
\label{fig-rules-fcns}
\end{figure}

\noindent From these definitions, it is straightforward to see that 
for a set $L \subseteq \Forests(\B)$, the following properties are equivalent
\begin{itemize}
\item $L$ is an $\#$-stuttering language,
\item $\fcns(L)$ is closed under the rules in Figure~\ref{fig-rules-fcns},
\item there exists an $\#$-stuttering automaton recognizing $\fcns(L)$.
\end{itemize}

\medskip

We now turn to the construction of $\Ea$.
Let $\Aa = (Q_\Aa, \A, \B, q^0_{\Aa}, F_\Aa, \Delta_\Aa)$ 
and $\Ba = (Q_\Ba, \B_\#, q^0_{\Ba}, F_\Ba, \Delta_\Ba)$
be the $\fcns$ views of the two components of $\Da$.
The automaton $\Ba$ is $\#$-stuttering (\textit{i.e.} there is a distinction in its
states between $\#$-states and non-$\#$-states, and the existence of a
$\#$-state $p_\# \in Q_\Ba$ on which $\Ba$ evaluates the tree with a single
node labeled with $\#$) and we also assume without loss of generality that it
is deterministic and complete, \textit{i.e.}, for every $\btree \in
\Trees(\B_\#)$, $\Ba$ evaluates into exactly one state of $Q_\Ba$.

Here $Q_\Aa$ and $Q_\Ba$ (resp. $F_\Aa$, $F_\Ba$) are the respective state sets
(resp. final state sets) of $\Aa$ and $\Ba$, 
$\A$ is the input alphabet of $\Aa$, 
$\B$ is the output alphabet of $\Aa$ and input alphabet of $\Ba$
(with the symbol~$\#$),
and $\Delta_\Aa$, $\Delta_\Ba$ are the sets of transitions.
We will use the notation  
$(r_1, a, r_2,\; r, b)$,
for a transition of $\Aa$ from the states $r_1$, $r_2$ in the left and right child 
of a node of label $a$, renaming this node with $b$ and moving up with state $r$.
In the following, we write explicitly the set of states of $\Ba$ 
as $Q_\Ba = \{ p_\#,p_1,\ldots,p_k \}$.

For any data tree $\tree\in\Trees(\B_\# \times\D)$, and any data value $d$ occurring in $\tree$, 
the state of $Q_\Ba$ corresponding to the evaluation of $\Ba$ on the class forest $\tree[d]$
is called \emph{the $\Ba$-state associated to $d$ in $\tree$}. 
When $d$ is the data value at the root of $\tree$,
this state is called the \emph{\rootstate{}} of $\tree$.
Note that for all $\tree$ the \rootstate{} of $\tree$ exists and is unique,
since $\Ba$ is assumed to be deterministic and complete, 
and that it is always a non-$\#$-state.

We now construct the expected \ebvass $\Ea = (Q,\A,q_0,k,\delta,\chi)$ 
with $k = |Q_\Ba|-1$.
We set $Q = Q_\Aa \times Q_\Ba \times Q_0$, 
where $Q_0$ is a finite set of auxiliary control states.
The initial state $q_0$ is the tuple formed with $q^0_{\Aa}$, the initial state of $\Aa$,
$q^0_{\Ba}$, the initial state of $\Ba$, and a specific state of $Q_0$.
The first and second components of a state $q\in Q$
are respectively called the $\Aa$-state and the 
\rootstate of $q$.
The transitions of the \ebvass $\Ea$ are constructed in order to ensure
the following invariant:

\medskip
\noindent
\hspace{-1mm}$(\star)$%
\hspace{2mm}\parbox[t]{128mm}{$\Ea$ reaches the configuration $(q,v)$ at the
  root of a tree $\atree \in \Trees(\A)$ 
  iff there exists a data tree $\tree=\atree\otimes\dtree$
  and a possible output $\btree \in \Trees(\B)$ of $\Aa$ on $\atree$ 
  witnessed by a run of $\Aa$ whose state at the root of $\atree$ is the $\Aa$-state of~$q$,   
  and moreover for all $i$, $1 \leq i \leq k$,
  $v_i$ is the number of data values having $p_i$ as associated $\Ba$-state 
  in $\btree\otimes\dtree$.}  
\medskip

\noindent Note that the counters ignore the number of data values having $p_\#$ as associated $\Ba$-state 
(which will always be infinite).
A consequence of $(\star)$ 
is that:
$(\diamond)$ there is only one non-$\#$-state $p_i \in Q_\Ba$ such that $v_i \neq 0$, and actually $v_i = 1$.
We will refer to this state $p_i$ as the \emph{\rootstate} of $v$,
and the construction of $\Ea$ will ensure that $p_i$ is also the \rootstate of $q$.

If we can achieve the invariant $(\star)$ then we are done.
Indeed, we can add to $\Ea$ some $\epsilon$-transitions which, 
when reaching a state $q$ containing a final $\Aa$-state,
decrement the counters corresponding to final states of $\Ba$ (and only those).
Then, $\Ea$ reaches a configuration $(q,v_0)$ with the $\Aa$-state of $q$ being
a final state of $\Aa$ iff there exists a data tree accepted by~$\Da$.

Notice that the property $(\star)$ is invariant under permutations of $\D$.
Hence if a tree $\dtree$ witnesses the property $(\star)$, then any tree
$\dtree'$ constructed from $\dtree$ by permuting the data values is also a
witness for $(\star)$.  This observation will be useful for showing the
correctness of the construction of $\Ea$.

Before defining the transition relation of $\Ea$ we sketch with more details its
construction.

The automaton $\Ea$ needs to maintain the invariant $(\star)$.  One direction
will be immediate: if $\Da$ has an accepting run on $\atree\otimes\dtree$ then
$\Ea$ is constructed so that it has an accepting run on $\atree$ satisfying
$(\star)$ as witnessed by $\dtree$. For the converse direction, we need to
construct from a run of $\Ea$ on $\atree$ a tree $\dtree$ such that
$\Da$ has a run on $\atree \otimes \dtree$ as in $(\star)$.

The simulation of $\Aa$ is straightforward as $\Ea$ can simulate any regular
tree automaton. The simulation of $\Aa$ is done using the $\Aa$-state of the
states of $\Ea$: for every $\epsilon$-transition $(q,a,\; q',u)$ of $\Ea$, the
$\Aa$-states of $q$ and $q'$ must coincide, and for every up-transition
$(q_1, a, q_2,\; q)$ of $\Ea$, there exists a transition of $\Aa$ of the form
$(r_1, a, r_2,\; r, b)$, for some $b \in \B$ such that $r_1,r_2$ and $r$ are
the respective $\Aa$-states of $q_1$, $q_2$ and $q$. In other words, the
$\Aa$-state of $\Ea$ always is the state of $\Aa$ at the current node.

Let's now turn to the simulation of $\Ba$ and the invariant $(\star)$.
This invariant will be shown by induction in the depth of the tree.
Let us assume that $\Ea$ reached the configuration $(q,v)$ 
at the root $x$ of a tree $\atree$.

If $x$ is a leaf node, then by definition of \ebvass, 
$q$ is the initial state $q_0$ of $\Ea$ 
and $v = v_0$,
hence $(\star)$ holds.

If $x$ is an inner node, then $\atree=a(\atree_1,\atree_2)$ for some letter
$a\in \A$.  By induction on the depth, we have trees $\dtree_1$
and $\dtree_2$ such that there is a run of $\Da$ on $\atree_1\otimes\dtree_1$ and $\atree_2\otimes\dtree_2$ satisfying $(\star)$. 
From the remark above on the invariance of $(\star)$ under permutations of $\D$, 
we can assume that $\dtree_1$ and $\dtree_2$
do not share any data value.
We need to set the transitions of $\Ea$ such that from $\dtree_1$, $\dtree_2$,
we can construct a tree $\dtree$ such that $\Da$ also has 
a run on $\atree\otimes\dtree$ as in ($\star$). The tree $\dtree$ will
be of the form $d(\dtree'_1,\dtree_2')$ for some $d\in \D$, where $\dtree'_1$
and $\dtree'_2$ are constructed from $\dtree_1$ and $\dtree_2$ by permuting
the data values. The permutation will identify some data values of $\dtree_1$
with some data values of $\dtree_2$. The number of data values we identify is
given by the $n$ in the constraints of $\Ea$ as explained in the initial
sketch on page~\pageref{sketch:sec-dad-counter}. 
This $n$ is therefore given by the run of $\Ea$ and we will see that it does not
matter which data values we actually choose, it is only important that we pick
$n$ of them. The constraints make sure that this is consistent with the runs of $\Ba$.

For this purpose we define $\chi$ as the set of constraints of the form
$\merc{j_1}{j_2}{j}$ such that there exists a transition $(p_{j_1},\#, p_{j_2})
\to p_j$ of $\Ba$ where $p_{j_1}$, $p_{j_2}$, and $p_j$ are $\#$-states in
$Q_\Ba \setminus \{ p_\#\}$.  Note that the commutativity rule in the
definition of $\#$-stuttering languages implies that whenever we have a
constraint $\merc{j_1}{j_2}{j}$ then both $(p_{j_1},\#, p_{j_2}) \to p_j$ and
$(p_{j_2},\#, p_{j_1}) \to p_j$ are transitions of $\Ba$.

This does maintain $(\star)$ assuming that $d$, the data value expected at $x$,
is not among the data values we identify (in the transitions used to construct
$\chi$ the root symbol is $\#$). This data value $d$ has to be treated
separately and we have several cases depending on whether $d$ is completely new
(does not occur in~$\dtree'_1$ and~$\dtree'_2$), or occurs in~$\dtree'_1$ but
not in~$\dtree'_2$, or the other way round, or it occurs in both subtrees. Actually it
will also be necessary to consider separately the cases where~$d$ occurs at the
root of~$\dtree_1'$ or~$\dtree'_2$.

This last choice is guessed by $\Ea$ and can therefore be read from the run
of $\Ea$. We can then choose $d$ consistently with the guess of $\Ea$. Again
the precise value of $d$ is not important. It is only important that its
equality type with the other data values is consistent with the choice made by
$\Ea$. This makes finitely many cases and we define the transition function of
$\Ea$ as the union of corresponding family of transitions. Each of them
involving disjoint intermediate states they don't interfere between each
other. We therefore define them separately and immediately after prove that they
do maintain $(\star)$ for their case.

\medskip

\paragraph{\bf 1.}
\emph{$\Ea$ guessed that the data value of the current node 
 is equal to the data value of both its children}.

\noindent
To handle this case, for each transition $\tau=(p_{i_1},b,p_{i_2}) \ra p_{i}$ of $\Ba$, 
where none of $p_{i_1},p_{i_2},p_i$ are $\#$-states, $\Ea$ has the following transitions:
\begin{description}
\item[\rm $\epsilon$-transitions:] \quad\\
from a state $q_1$ of $\Ba$-state $p_{i_1}$ 
it decreases counter $i_1$ 
and moves to a state $q_\tau^1$\\ 
from a state $q_2$ of $\Ba$-state $p_{i_2}$
it decreases counter $i_2$ 
and moves to state $q_\tau^2$\\
from a state $q_\tau$ 
it increases counter $i$ 
and moves to a state $q$ of $\Ba$-state $p_i$

\item[\rm up-transition:]
$(q_\tau^1,a,q_\tau^2,q_\tau)$.
\end{description}
The state $q_\tau^1$ (resp. $q_\tau^2$, $q_\tau$) 
differs from $q_1$  (resp. $q_2$, $q$)
only by its third component (in $Q_0$), that contains $\tau$.
We shall use the same convention for the states introduced in the 
following construction cases.

\paragraph{\bf Correctness.} 
Let us show that if $\Ea$ makes an up-transition $(q^1_\tau,a,q^2_\tau,q_\tau)$  
at the root of $\atree \in \Trees(\A)$ we can construct $\dtree$ such that $\Da$
has a run on $\atree\otimes\dtree$ satisfying~$(\star)$.
This up-transition can only occur if we had $\epsilon$-transitions from $q_1$ to
$q_\tau^1$ in the left subtree and from $q_2$ to
$q_\tau^2$ in the right subtree where $p_{i_1}$ and $p_{i_2}$ are the
$\Ba$-states of $q_1$ and $q_2$.
Let $x$ be the root of the tree $\atree\otimes\dtree$ where this transition
occurred. We have $\atree = a(\atree_1,\atree_2)$. 
By induction hypothesis we have trees
$\dtree_1$ and $\dtree_2$ 
and possible outputs $\btree_1, \btree_2 \in \Trees(\B)$ of $\Aa$ 
on respectively $\atree_1$ and $\atree_2$ 
such that there is a run of $\Da$ on 
$\tree_1 = \atree_1\otimes\dtree_1$ and $\tree_2 = \atree_2\otimes\dtree_2$ satisfying~$(\star)$.

We first apply a bijection on the labels of $\dtree_1$ in order for the data
value of its root to match the one of the root of $\dtree_2$. Let $d$ be this
data value.

For each constraint $\theta=\merc{k_1}{k_2}{k} \in \chi$ 
we let $n^1_\theta$ and $n^2_\theta$ be the numbers used by the run of $\Ea$ when using
the above up-transition.  By induction hypothesis $(\star)$, and semantics of the
constraints (making sure the counters are big enough) there are at least
$n^1_\theta$ (resp. $n^2_\theta$) distinct data values different from $d$
(because the up-transition is applied \emph{after} we decreased the counter $k_1$ by $n^1_\theta$) 
in $\dtree_1$ having $p_{k_1}$ (resp. $p_{k_2}$) as associated
$\Ba$-state in $\tree_1 = \btree_1\otimes\dtree_1$, 
and similarly for $\tree_2 = \btree_2\otimes\dtree_2$.  
We pick such data values in each subtree 
and call them the data values associated to $\theta$.  
We do this for all constraints $\theta$ 
and we choose the associated data values such that they are all distinct.  
We now apply to $\dtree_2$ a permutation on the data values 
such that for all $\theta$ the data values associated to $\theta$ in $\dtree_2$ 
are identified with the ones for $\dtree_1$ and such that all other
data values are distinct.  In order to simplify the notations we call the
resulting tree also $\dtree_2$.  We then set $\dtree$ as $d(\dtree_1,\dtree_2)$
and $\tree'=\btree\otimes\dtree$ where $\btree = b(\btree_1, \btree_2)$ 
is an output of $\Aa$ on $\atree$ compatible with the transition.

Let $e$ be an arbitrary data value occurring in $\dtree$.

If $e = d$, the root symbol of the class forest $\tree'[e]$ 
is $a$ and the counter $i$ is increased by 1 by the last $\epsilon$-transition.
By induction hypothesis and its consequence $(\diamond)$, 
$v_i = 1$ and for all other non-$\#$-states 
the corresponding value via $v$ will be 0. 
Hence $p_i$ is the new \rootstate of $v$. 
It is also the \rootstate of $q$ by construction.

If $e \neq d$ we consider 3 subcases. 
If $e$ occurs in both $\dtree_1$ and $\dtree_2$ 
then the class forest $\tree'[e]$ has the form $\#(\stree_1,\stree_2)$ 
for some forests $\stree_1$ and $\stree_2$
containing each at least one symbol other than $\#$ (not at the root node).
Let $p_{j_1}$ and $p_{j_2}$ be the states reached by $\Ba$ when evaluating $\stree_1$ and $\stree_2$.
They are the $\Ba$-states associated to $e$ in $\tree'_1$ and $\tree'_2$,
(resp. the left- and right subtrees of $\tree'$),
and both are $\#$-states in $Q_\Ba \setminus \{ p_\# \}$.
By construction of $\dtree$, there are at least
$n_\theta=n_\theta^1+n_\theta^2$ such data values $e$, 
where $\theta = \merc{j_1}{j_2}{j}$ 
and $p_j$ is the unique state of $\Ba$ such that 
$(p_{j_1}, \#, p_{j_2}) \to p_{j}$ is a transition of $\Ba$ (and
therefore also $(p_{j_2}, \#, p_{j_1}) \to p_j$ is also a transition). 
These $n_\theta$ data values will contribute to an increase of $v_j$ 
by $n_\theta$ as expected.

Assume now that $e$ occurs in $\dtree_1$ but not in $\dtree_2$ (the remaining
case being symmetrical). Then $\tree'[e]$ has the form $\#(\stree_1,\stree_2)$ 
where $\stree_1$ contains at least one symbol other than $\#$ (not at root node),
and all nodes of $\stree_2$ are labeled~$\#$.
By the hypothesis that $\Ba$ is $\#$-stuttering, 
the $\Ba$-state associated to $e$ in $\tree'$ is the same as the one
associated to $e$ in $\tree'_1$,
and the $\Ba$-state associated to $e$ in $\tree'_2$ is $p_\#$.
This is consistent with the behavior of $\Ea$
that propagates upward the value of the counter corresponding to this state, 
after applying the constraints.
Altogether this shows that $\tree = \atree\otimes\dtree$ verifies $(\star)$.

\tikzset{
itria/.style={
  draw,shape border uses incircle,
  isosceles triangle,shape border rotate=90,yshift=-0.7cm},
rtria/.style={
  draw,dashed,shape border uses incircle,
  isosceles triangle,isosceles triangle apex angle=90,
  shape border rotate=-45,yshift=0.2cm,xshift=0.5cm},
ritria/.style={
  draw,dashed,shape border uses incircle,
  isosceles triangle,isosceles triangle apex angle=110,
  shape border rotate=-55,yshift=0.1cm},
letria/.style={
  draw,dashed,shape border uses incircle,
  isosceles triangle,isosceles triangle apex angle=110,
  shape border rotate=235,yshift=0.1cm}
}

\begin{figure}
\small
\[
\begin{array}{ccc}
\tree'=\btree\otimes\dtree = 
\hspace{-8mm}
\begin{minipage}[t]{24mm}
\vspace{-3.5mm}
\begin{tikzpicture}
  [level distance=7mm,
   level 1/.style={sibling distance=12mm},
   level 2/.style={sibling distance=4mm}]
\node {$(b, d)$} 
 child { node {$(b_1, d)$} { node[itria] {} } }
 child { node {$(b_2, d)$} { node[itria] {} } };
\end{tikzpicture}
\end{minipage}
& 
\tree'[d] = 
\hspace{-8mm}
\begin{minipage}[t]{24mm}
\vspace{-3.5mm}
\begin{tikzpicture}
  [level distance=7mm,
   level 1/.style={sibling distance=12mm},
   level 2/.style={sibling distance=4mm}]
\node {$b (p_i)$} 
 child { node {$b_1 (p_{i_1})$} { node[itria] {} } }
 child { node {$b_2 (p_{i_2})$} { node[itria] {} } };
\end{tikzpicture}
\end{minipage}
\end{array}
\]
\caption{Proof of Theorem~\ref{prop-reduct}, Case 1.
The $\Ba$-states are displayed in parentheses in the class tree $\tree'[d]$.}
\label{fig-reduct1}
\end{figure}
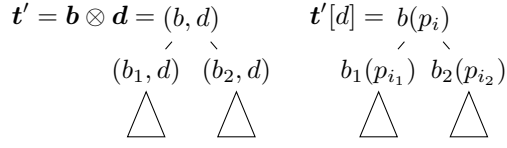

\paragraph{\bf 2.}  \emph{$\Ea$ guessed that the data value $d_1$ of
  the current node is equal to the data value of its left child but different
  from the data value $d_2$ of its right child. 
  Moreover $\Ea$ guessed that the $\Ba$-state associated to $d_1$ in the right
  subtree is $p_{k_2}$, and that the data value $d_2$ of the right child 
  also appears in the left subtree, with $p_{k_1}$ as $\Ba$-state associated to $d_2$ in this left subtree.
  Note that both $p_{k_1}$ and $p_{k_2}$ must be $\#$-states in $Q_\Ba \setminus \{ p_\#\}$.}

\noindent
To handle this case for all transitions $\tau=(p_{i_1},b,p_{k_2}) \ra p_{i}$
and $\tau'=(p_{k_1},\#,p_{i_2})\ra p_j$ of
$\Ba$, where none of $p_{i_1},p_{i_2}, p_i$ are $\#$-states but
$p_j$ (like $p_{k_1}$ and $p_{k_2}$) are $\#$-states, 
$\Ea$ has the following transitions:
\begin{description}
\item[\rm $\epsilon$-transitions:] \quad\\
from a state $q_1$ of $\Ba$-state $p_{i_1}$
it decreases the counters $i_1$ and $k_1$ 
and moves to state $q_{\tau,\tau'}^1$\\
from a state $q_2$ of $\Ba$-state $p_{i_2}$ 
it decreases the counters $i_2$ and $k_2$ 
and moves to state $q_{\tau,\tau'}^2$\\
from a state $q_{\tau,\tau'}$ 
it increases counters $i$ and $j$ 
and moves to a state $q$ of $\Ba$-state $p_i$

\item[\rm up-transition:]
$(q_{\tau,\tau'}^1,a,q_{\tau,\tau'}^2,q_{\tau,\tau'})$.
\end{description}

\paragraph{\bf Correctness.}  
We argue as in the previous case with the following modifications.
From $\dtree_1$ and $\dtree_2$ we first apply a bijection making sure that the
data values $d_1$ and $d_2$ of their roots are different and that $d_1$ has
$\Ba$-state $p_{k_2}$ in $\btree_2 \otimes \dtree_2$ 
and $d_2$ has $\Ba$-state $p_{k_1}$ in $\btree_1 \otimes \dtree_1$,
where $\atree = a(\atree_1,\atree_2)$ and 
$\btree_1, \btree_2 \in \Trees(\B)$ are possible outputs of $\Aa$ 
on respectively $\atree_1$ and $\atree_2$ from the induction hypothesis.

For each $\theta \in \chi$ we select the associated data values making sure
they are neither $d_1$ nor $d_2$. The decrement in the $\epsilon$-transitions make sure
that this is always possible. We then perform the same identification as in the
previous case. 
The same argument as above shows that the resulting tree $\dtree=d_1(\dtree_1,\dtree_2)$ has the desired properties.

\begin{figure}
\small
\[
\begin{array}{ccc}
\tree'=\btree\otimes\dtree = 
\hspace{-8mm}
\begin{minipage}[t]{24mm}
\vspace{-3.5mm}
\begin{tikzpicture}
  [level distance=7mm,
   level 1/.style={sibling distance=12mm},
   level 2/.style={sibling distance=4mm}]
\node {$(b, d_1)$} 
 child { node {$(b_1, d_1)$} { node[itria] {} } }
 child { node {$(b_2, d_2)$} { node[itria] {} } };
\end{tikzpicture}
\end{minipage}
& 
\tree'[d_1] = 
\hspace{-8mm}
\begin{minipage}[t]{24mm}
\vspace{-3.5mm}
\begin{tikzpicture}
  [level distance=7mm,
   level 1/.style={sibling distance=12mm},
   level 2/.style={sibling distance=4mm}]
\node {$b (p_i)$} 
 child { node {$b_1 (p_{i_1})$} { node[itria] {} } }
 child { node {$\# (p_{k_2})$} { node[itria] {} } };
\end{tikzpicture}
\end{minipage}
& 
\tree'[d_2] = 
\hspace{-8mm}
\begin{minipage}[t]{24mm}
\vspace{-3.5mm}
\begin{tikzpicture}
  [level distance=7mm,
   level 1/.style={sibling distance=12mm},
   level 2/.style={sibling distance=4mm}]
\node {$\# (p_j)$} 
 child { node {$\# (p_{k_1})$} { node[itria] {} } }
 child { node {$b_2 (p_{i_2})$} { node[itria] {} } };
\end{tikzpicture}
\end{minipage}
\end{array}
\]
\caption{Proof of Theorem~\ref{prop-reduct}, Case 2.}
\label{fig-reduct2}
\end{figure}

\paragraph{\bf 3.}  
\emph{$\Ea$ guessed that the data value $d_1$ of the
  current node is equal to the data value of its left child but different
  from the data value 
  of its right child.
  Moreover $\Ea$ guessed that $d_1$ also appear in
  the right subtree of the current node, with $p_{k_2}$ as associated $\Ba$-state in this right subtree, 
  and that the data value 
  of the right child of the current node does not appear in the left subtree. 
  Note that $p_{k_2}$ must be a $\#$-state in $Q_\Ba \setminus \{ p_\#\}$.}

\noindent
To handle this case for all transitions $\tau=(p_{i_1},b,p_{k_2}) \ra p_{i}$
and $\tau'=(p_{\#},\#,p_{i_2})\ra p_j$ of
$\Ba$, where none of $p_{i_1},p_{i_2}, p_i$ are $\#$-states but
$p_{k_2}$ and $p_j$  are $\#$-states, 
$\Ea$ has the following transitions:
\begin{description}
\item[\rm $\epsilon$-transitions:] \quad\\
from a state $q_1$ of $\Ba$-state $p_{i_1}$
it decreases the counter $i_1$ 
and moves to state $q_{\tau,\tau'}^1$\\
from a state $q_2$ of $\Ba$-state $p_{i_2}$ 
it decreases the counter $i_2$ 
and $k_2$ and moves to state $q_{\tau,\tau'}^2$\\
from a state $q_{\tau,\tau'}$ 
it increases the counters $i$ and $j$ 
and moves to a state $q$ of $\Ba$-state $p_i$

\item[\rm up-transition:]
$(q_{\tau,\tau'}^1,a,q_{\tau,\tau'}^2,q_{\tau,\tau'})$.
\end{description}

\paragraph{\bf Correctness.}  
We argue as in the previous cases with the following modifications.
From $\dtree_1$ and $\dtree_2$ we first apply a bijection making sure that the
data values $d_1$ and $d_2$ of their roots are different and that $d_1$ has
$\Ba$-state $p_{k_2}$ in $\btree_2 \otimes \dtree_2$ 
and $d_2$ does not appear in $\dtree_1$ ($\btree_2$ is as in previous cases).

For each $\theta \in \chi$ we select the associated data values making sure
they are neither $d_1$ nor $d_2$. The decrement in the $\epsilon$-transitions make sure
that this is always possible. We then perform the same identification as in the
previous case. 
As before we show that the resulting tree $\dtree=d_1(\dtree_1,\dtree_2)$ has the desired properties.

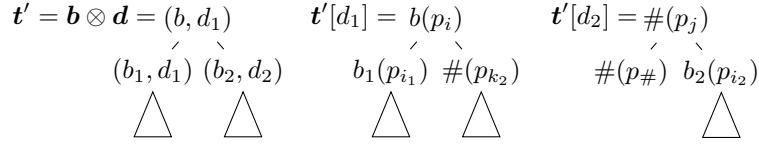
\begin{figure}
\small
\[
\begin{array}{ccc}
\tree'=\btree\otimes\dtree = 
\hspace{-8mm}
\begin{minipage}[t]{24mm}
\vspace{-3.5mm}
\begin{tikzpicture}
  [level distance=7mm,
   level 1/.style={sibling distance=12mm},
   level 2/.style={sibling distance=4mm}]
\node {$(b, d_1)$} 
 child { node {$(b_1, d_1)$} { node[itria] {} } }
 child { node {$(b_2, d_2)$} { node[itria] {} } };
\end{tikzpicture}
\end{minipage}
& 
\tree'[d_1] = 
\hspace{-8mm}
\begin{minipage}[t]{24mm}
\vspace{-3.5mm}
\begin{tikzpicture}
  [level distance=7mm,
   level 1/.style={sibling distance=12mm},
   level 2/.style={sibling distance=4mm}]
\node {$b (p_i)$} 
 child { node {$b_1 (p_{i_1})$} { node[itria] {} } }
 child { node {$\# (p_{k_2})$} { node[itria] {} } };
\end{tikzpicture}
\end{minipage}
& 
\tree'[d_2] = 
\hspace{-8mm}
\begin{minipage}[t]{24mm}
\vspace{-3.5mm}
\begin{tikzpicture}
  [level distance=7mm,
   level 1/.style={sibling distance=12mm},
   level 2/.style={sibling distance=4mm}]
\node {$\# (p_j)$} 
 child { node {$\# (p_\#)$} }
 child { node {$b_2 (p_{i_2})$} { node[itria] {} } };
\end{tikzpicture}
\end{minipage}
\end{array}
\]
\caption{Proof of Theorem~\ref{prop-reduct}, Case 3. The node without subtree in the class tree $\tree'[d_2]$ is a leaf.}
\label{fig-reduct3}
\end{figure}

\paragraph{\bf 4.} 
\emph{$\Ea$ guessed that the data value $d$ of the
  current node is different from the ones of its children but appear in both
  subtrees, with $p_{k_1}$ and $p_{k_2}$ as associated $\Ba$-states repectively in left and right subtrees. 
  Moreover $\Ea$ guessed that the data values of both children of the current node 
  are equal. 
  Note that $p_{k_1}$ and $p_{k_2}$ must be \#-states in $Q_\Ba \setminus \{ p_\#\}$.}

\noindent
To handle this case for all transitions $\tau=(p_{k_1},b,p_{k_2}) \ra p_{i}$
and $\tau'=(p_{i_1},\#,p_{i_2})\ra p_j$ of
$\Ba$, where none of $p_{i_1},p_{i_2}, p_i$ are $\#$-states but
$p_{k_1},p_{k_2}$ and $p_j$  are $\#$-states, $\Ea$ has the following transitions:
\begin{description}
\item[\rm $\epsilon$-transitions:]\quad\\
from a state $q_1$ of $\Ba$-state $p_{i_1}$ 
it decreases the counters $i_1$ and $k_1$ 
and moves to state $q_{\tau,\tau'}^1$\\
from a state $q_2$ of $\Ba$-state $p_{i_2}$ it
decreases the counters $i_2$ and $k_2$ 
and moves to state $q_{\tau,\tau'}^2$\\
from a state $q_{\tau,\tau'}$ 
it increases the counters $i$ and $j$ 
and moves to a state $q$ of $\Ba$-state $p_i$

\item[\rm up-transition:]
$(q_{\tau,\tau'}^1,a,q_{\tau,\tau'}^2,q_{\tau,\tau'})$.
\end{description}

\paragraph{\bf Correctness.}  
We argue as in the previous cases with the following modifications.

From $\dtree_1$ and $\dtree_2$ we first apply a bijection making sure that the
data value $d_1$ of their roots are equal and that $\dtree_1$ and $\dtree_2$ share a common
data value $d\neq d_1$ of $\Ba$-state $p_{k_2}$ in $\btree_2 \otimes\dtree_2$ 
and $\Ba$-state $p_{k_1}$ in $\btree_1 \otimes \dtree_1$.

For each $\theta \in \chi$ we select the associated data values making sure
they are neither $d_1$ nor $d$. The decrement in the $\epsilon$-transitions make sure
that this is always possible. We then perform the same identification as in the
previous case. 
The rest of the argument is similar after setting $\dtree=d(\dtree_1,\dtree_2)$.

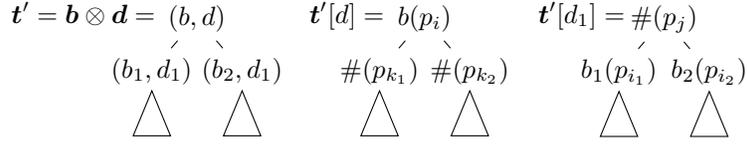
\begin{figure}
\small
\[
\begin{array}{ccc}
\tree'=\btree\otimes\dtree = 
\hspace{-8mm}
\begin{minipage}[t]{24mm}
\vspace{-3.5mm}
\begin{tikzpicture}
  [level distance=7mm,
   level 1/.style={sibling distance=12mm},
   level 2/.style={sibling distance=4mm}]
\node {$(b, d)$} 
 child { node {$(b_1, d_1)$} { node[itria] {} } }
 child { node {$(b_2, d_1)$} { node[itria] {} } };
\end{tikzpicture}
\end{minipage}
& 
\tree'[d] = 
\hspace{-8mm}
\begin{minipage}[t]{24mm}
\vspace{-3.5mm}
\begin{tikzpicture}
  [level distance=7mm,
   level 1/.style={sibling distance=12mm},
   level 2/.style={sibling distance=4mm}]
\node {$b (p_i)$} 
 child { node {$\# (p_{k_1})$} { node[itria] {} } }
 child { node {$\# (p_{k_2})$} { node[itria] {} } };
\end{tikzpicture}
\end{minipage}
& 
\tree'[d_1] = 
\hspace{-8mm}
\begin{minipage}[t]{24mm}
\vspace{-3.5mm}
\begin{tikzpicture}
  [level distance=7mm,
   level 1/.style={sibling distance=12mm},
   level 2/.style={sibling distance=4mm}]
\node {$\# (p_j)$} 
 child { node {$b_1 (p_{i_1})$} { node[itria] {} } }
 child { node {$b_2 (p_{i_2})$} { node[itria] {} } };
\end{tikzpicture}
\end{minipage}
\end{array}
\]
\caption{Proof of Theorem~\ref{prop-reduct}, Case 4.}
\label{fig-reduct4}
\end{figure}

\medskip\noindent\textbf{5.}  \emph{$\Ea$ guessed that the data value $d$ of the
  current node is different from the ones of its children but appear in
  both subtrees, with $p_{k_1}$ and $p_{k_2}$ as associated $\Ba$-states in respectively the left and right subtree.
  Moreover $\Ea$ guessed that the data values 
  of both children of the current node are
  distinct but appear in the other subtree with respective associated $\Ba$-state $p_{\ell_1}$ and $p_{\ell_2}$.
  Note that $p_{k_1}$, $p_{k_2}$, $p_{\ell_1}$, $p_{\ell_2}$ must be \#-states.}

To handle this case for all transitions
$\tau = (p_{k_1},b,p_{k_2}) \ra p_{i}$,
$\tau_1=(p_{i_1},\#,p_{\ell_1})\ra p_{j_1}$ and 
$\tau_2=(p_{\ell_2},\#,p_{i_2})\ra p_{j_2}$ of $\Ba$, 
where none of $p_{i_1},p_{i_2}, p_i$ are $\#$-states but
$p_{k_1},p_{k_2},p_{\ell_1},p_{\ell_2},p_{j_1},p_{j_2}$ are $\#$-states, $\Ea$ has the following
transitions:
\begin{description}
\item[\rm $\epsilon$-transitions:]\quad\\
from a state $q_1$ of $\Ba$-state $p_{i_1}$
it decreases the counters $i_1$, $k_1$ and $l_2$ 
and moves to state $q_{\tau,\tau_1,\tau_2}^1$\\
from a state $q_2$ of $\Ba$-state $p_{i_2}$ 
it decreases the counters $i_2$, $k_2$ and $l_1$ 
and moves to state $q_{\tau,\tau_1,\tau_2}^2$\\
from a state $q_{\tau,\tau_1,\tau_2}$ 
it increases the counters $i$, $j_1$ and $j_2$ 
and moves to a state $q$ of $\Ba$-state $p_i$

\item[\rm up-transition:]
$(q_{\tau,\tau_1\,\tau_2}^1,a,q_{\tau,\tau_1,\tau_2}^2,q_{\tau,\tau_1,\tau_2})$.
\end{description}

\paragraph{\bf Correctness.}  
We argue as in the previous cases with the following modifications.

From $\dtree_1$ and $\dtree_2$ we first apply a bijection making sure that the
data values $d_1$ and $d_2$ of their roots are distinct and that $d_1$ has
$\Ba$-state $p_{\ell_1}$ in $\btree_2\otimes\dtree_2$ and $d_2$ has $\Ba$-state $p_{\ell_2}$ in
$\btree_1\otimes\dtree_1$. Moreover $\dtree_1$ and $\dtree_2$ share a common data value $d$
distinct from $d_1$ and $d_2$ of $\Ba$-state $p_{k_2}$ in $\dtree_2$ and
$\Ba$-state $p_{k_1}$ in $\dtree_1$.

For each $\theta \in \chi$ we select the associated data values making sure
that they are neither $d$, $d_1$ nor $d_2$. The decrement in the $\epsilon$-transitions make sure
that this is always possible. We then perform the same identification as in the
previous case. 
The rest of the argument is similar after setting $\dtree=d(\dtree_1,\dtree_2)$.

\begin{figure}
\small
\[
\begin{array}{cccc}
\tree'=\btree\otimes\dtree = 
\hspace{-8mm}
\begin{minipage}[t]{24mm}
\vspace{-3.5mm}
\begin{tikzpicture}
  [level distance=7mm,
   level 1/.style={sibling distance=12mm},
   level 2/.style={sibling distance=4mm}]
\node {$(b, d)$} 
 child { node {$(b_1, d_1)$} { node[itria] {} } }
 child { node {$(b_2, d_2)$} { node[itria] {} } };
\end{tikzpicture}
\end{minipage}
& 
\tree'[d] = 
\hspace{-8mm}
\begin{minipage}[t]{24mm}
\vspace{-3.5mm}
\begin{tikzpicture}
  [level distance=7mm,
   level 1/.style={sibling distance=12mm},
   level 2/.style={sibling distance=4mm}]
\node {$b (p_i)$} 
 child { node {$\# (p_{k_1})$} { node[itria] {} } }
 child { node {$\# (p_{k_2})$} { node[itria] {} } };
\end{tikzpicture}
\end{minipage}
& 
\tree'[d_1] = 
\hspace{-6mm}
\begin{minipage}[t]{24mm}
\vspace{-3.5mm}
\begin{tikzpicture}
  [level distance=7mm,
   level 1/.style={sibling distance=12mm},
   level 2/.style={sibling distance=4mm}]
\node {$\# (p_{j_1})$} 
 child { node {$b_1 (p_{i_1})$} { node[itria] {} } }
 child { node {$\# (p_{\ell_1})$} { node[itria] {} } };
\end{tikzpicture}
\end{minipage}
& 
\tree'[d_2] = 
\hspace{-6mm}
\begin{minipage}[t]{24mm}
\vspace{-3.5mm}
\begin{tikzpicture}
  [level distance=7mm,
   level 1/.style={sibling distance=12mm},
   level 2/.style={sibling distance=4mm}]
\node {$\# (p_{j_2})$} 
 child { node {$\# (p_{\ell_2})$} { node[itria] {} } }
 child { node {$b_2 (p_{i_2})$} { node[itria] {} } };
\end{tikzpicture}
\end{minipage}
\end{array}
\]
\caption{Proof of Theorem~\ref{prop-reduct}, Case 5.}
\label{fig-reduct5}
\end{figure}

\paragraph{\bf 6.}
  \emph{$\Ea$ guessed that the data value $d$ of the
  current node is different from the ones of its children but appear in both
  subtrees, with $p_{k_1}$ and $p_{k_2}$ as associated $\Ba$-states in respectively the left and right subtree. 
  Moreover $\Ea$ guessed that the data value 
  of the right child of the current node appear in its
  left subtree, with $p_{\ell_1}$ as associated $\Ba$-state in this left subtree, 
  and that the data value 
  of the left child does not appear in the right subtree. 
  Note that $p_{k_1}$, $p_{k_2}$ and $p_{\ell_1}$ must be \#-states in $Q_\Ba \setminus \{ p_\#\}$.}

\noindent
To handle this case for all transitions 
$\tau = (p_{k_1},b,p_{k_2}) \ra p_{i}$,
$\tau_1 = (p_{i_1},\#,p_{\#})\ra p_{j_1}$ and 
$\tau_2 = (p_{\ell_1},\#,p_{i_2})\ra p_{j_2}$ of $\Ba$, 
where none of $p_{i_1},p_{i_2}, p_i$ are $\#$-states but
$p_{k_1},p_{k_2},p_{\ell_1},p_{j_1},p_{j_2}$ are $\#$-states, $\Ea$ has the following transitions:
\begin{description}
\item[\rm $\epsilon$-transitions:]\quad\\
from a state $q_1$ of $\Ba$-state $p_{i_1}$
it decreases the counters $i_1$, $k_1$ and $l_1$ 
and moves to state $q_{\tau,\tau_1,\tau_2}^1$\\
from a state $q_2$ of $\Ba$-state $p_{i_2}$ 
it decreases the counters $i_2$, $k_2$ 
and moves to state $q_{\tau,\tau_1,\tau_2}^2$\\
from a state $q_{\tau,\tau_1,\tau_2}$
it increases the counters $i$, $j_1$ and $j_2$ 
and moves to a state $q$ of $\Ba$-state $p_i$

\item[\rm up-transition:]
$(q_{\tau,\tau_1\,\tau_2}^1,a,q_{\tau,\tau_1,\tau_2}^2,q_{\tau,\tau_1,\tau_2})$.
\end{description}

\paragraph{\bf Correctness.}  
We argue as in the previous cases with the following modifications.

From $\dtree_1$ and $\dtree_2$ we first apply a bijection making sure that the
data values $d_1$ and $d_2$ of their roots are distinct and that $d_1$ does not
appear in $\dtree_2$ and $d_2$ has $\Ba$-state $p_{\ell_1}$ in
$\btree_1\otimes\dtree_1$. Moreover $\dtree_1$ and $\dtree_2$ share a common data value $d$
distinct from $d_1$ and $d_2$ of $\Ba$-state $p_{k_2}$ in $\btree_2\otimes\dtree_2$ and
$\Ba$-state $p_{k_1}$ in $\btree_1\otimes\dtree_1$.

For each $\theta \in \chi$ we select the associated data values making sure
that they are neither $d$, $d_1$ nor $d_2$. The decrement in the $\epsilon$-transitions make sure
that this is always possible. We then perform the same identification as in the
previous case. 
The rest of the argument is similar after setting $\dtree=d(\dtree_1,\dtree_2)$.

\begin{figure}
\small
\[
\begin{array}{cccc}
\tree'=\btree\otimes\dtree = 
\hspace{-8mm}
\begin{minipage}[t]{24mm}
\vspace{-3.5mm}
\begin{tikzpicture}
  [level distance=7mm,
   level 1/.style={sibling distance=12mm},
   level 2/.style={sibling distance=4mm}]
\node {$(b, d)$} 
 child { node {$(b_1, d_1)$} { node[itria] {} } }
 child { node {$(b_2, d_2)$} { node[itria] {} } };
\end{tikzpicture}
\end{minipage}
& 
\tree'[d] = 
\hspace{-8mm}
\begin{minipage}[t]{24mm}
\vspace{-3.5mm}
\begin{tikzpicture}
  [level distance=7mm,
   level 1/.style={sibling distance=12mm},
   level 2/.style={sibling distance=4mm}]
\node {$b (p_i)$} 
 child { node {$\# (p_{k_1})$} { node[itria] {} } }
 child { node {$\# (p_{k_2})$} { node[itria] {} } };
\end{tikzpicture}
\end{minipage}
& 
\tree'[d_1] = 
\hspace{-6mm}
\begin{minipage}[t]{24mm}
\vspace{-3.5mm}
\begin{tikzpicture}
  [level distance=7mm,
   level 1/.style={sibling distance=12mm},
   level 2/.style={sibling distance=4mm}]
\node {$\# (p_{j_1})$} 
 child { node {$b_1 (p_{i_1})$} { node[itria] {} } }
 child { node {$\# (p_\#)$} { } };
\end{tikzpicture}
\end{minipage}
& 
\tree'[d_2] = 
\hspace{-6mm}
\begin{minipage}[t]{24mm}
\vspace{-3.5mm}
\begin{tikzpicture}
  [level distance=7mm,
   level 1/.style={sibling distance=12mm},
   level 2/.style={sibling distance=4mm}]
\node {$\# (p_{j_2})$} 
 child { node {$\# (p_{\ell_1})$} { node[itria] {} } }
 child { node {$b_2 (p_{i_2})$} { node[itria] {} } };
\end{tikzpicture}
\end{minipage}
\end{array}
\]
\caption{Proof of Theorem~\ref{prop-reduct}, Case 6.}
\label{fig-reduct6}
\end{figure}

\paragraph{\bf 7.}  
\emph{$\Ea$ guessed that the data value $d$ of the
  current node is different from the ones of its children but appears in both
  subtrees with $p_{k_1}$ and $p_{k_2}$ as associated $\Ba$-states. 
  Moreover it guessed that the data values of both children of the current node do not appear elsewhere.
  Note that $p_{k_1}$, $p_{k_2}$ must be \#-states in $Q_\Ba \setminus \{ p_\#\}$.}

\noindent
To handle this case for all transitions 
$\tau = (p_{k_1},b,p_{k_2}) \ra p_{i}$,
$\tau_1 = (p_{\#},b,p_{p_2})\ra p_{j_1}$ and 
$\tau_2 = (p_{i_1},b,p_{\#})\ra p_{j_2}$ of $\Ba$, 
where none of $p_{i_1},p_{i_2}, p_i$ are $\#$-states but
$p_{k_1},p_{k_2},p_{j_1},p_{j_2}$ are $\#$-states, $\Ea$ has the following
transitions:
\begin{description}
\item[\rm $\epsilon$-transitions:]\quad\\
from a state $q_1$ of $\Ba$-state $p_{i_1}$
it decreases the counters $i_1$, $k_1$ 
and moves to state $q_{\tau,\tau_1,\tau_2}^1$\\
from a state $q_2$ of $\Ba$-state $p_{i_2}$ 
it decreases the counters $i_2$, $k_2$ 
and moves to state $q_{\tau,\tau_1,\tau_2}^2$\\ 
from a state $q_{\tau,\tau_1,\tau_2}$ 
it increases the counters $i$, $j_1$ and $j_2$ 
and moves to a state $q$ of $\Ba$-state $p_i$

\item[\rm up-transition:]
$(q_{\tau,\tau_1\,\tau_2}^1,a,q_{\tau,\tau_1,tau_2}^2,q_{\tau,\tau_1,\tau_2})$.
\end{description}

\paragraph{\bf Correctness.}  
We argue as in the previous cases with the following modifications.

From $\dtree_1$ and $\dtree_2$ we first apply a bijection making sure that the
data values $d_1$ and $d_2$ of their roots are distinct and that $d_1$ does not
appear in $\dtree_2$ and $d_2$ does not appear in
$\dtree_1$. Moreover $\dtree_1$ and $\dtree_2$ share a common data value $d$
distinct from $d_1$ and $d_2$ of $\Ba$-state $p_{k_2}$ in $\dtree_2$ and
$\Ba$-state $p_{k_1}$ in $\dtree_1$.

For each $\theta \in \chi$ we select the associated data values making sure
that they are neither $d$, $d_1$ nor $d_2$. The decrement in the $\epsilon$-transitions make sure
that this is always possible. We then perform the same identification as in the
previous case. 
The rest of the argument is similar after setting $\dtree=d(\dtree_1,\dtree_2)$.

\begin{figure}
\small
\[
\begin{array}{cccc}
\tree'=\btree\otimes\dtree = 
\hspace{-8mm}
\begin{minipage}[t]{24mm}
\vspace{-3.5mm}
\begin{tikzpicture}
  [level distance=7mm,
   level 1/.style={sibling distance=12mm},
   level 2/.style={sibling distance=4mm}]
\node {$(b, d)$} 
 child { node {$(b_1, d_1)$} { node[itria] {} } }
 child { node {$(b_2, d_2)$} { node[itria] {} } };
\end{tikzpicture}
\end{minipage}
& 
\tree'[d] = 
\hspace{-8mm}
\begin{minipage}[t]{24mm}
\vspace{-3.5mm}
\begin{tikzpicture}
  [level distance=7mm,
   level 1/.style={sibling distance=12mm},
   level 2/.style={sibling distance=4mm}]
\node {$b (p_i)$} 
 child { node {$\# (p_{k_1})$} { node[itria] {} } }
 child { node {$\# (p_{k_2})$} { node[itria] {} } };
\end{tikzpicture}
\end{minipage}
& 
\tree'[d_1] = 
\hspace{-6mm}
\begin{minipage}[t]{24mm}
\vspace{-3.5mm}
\begin{tikzpicture}
  [level distance=7mm,
   level 1/.style={sibling distance=12mm},
   level 2/.style={sibling distance=4mm}]
\node {$\# (p_{j_1})$} 
 child { node {$b_1 (p_{i_1})$} { node[itria] {} } }
 child { node {$\# (p_\#)$} { } };
\end{tikzpicture}
\end{minipage}
& 
\tree'[d_2] = 
\hspace{-6mm}
\begin{minipage}[t]{24mm}
\vspace{-3.5mm}
\begin{tikzpicture}
  [level distance=7mm,
   level 1/.style={sibling distance=12mm},
   level 2/.style={sibling distance=4mm}]
\node {$\# (p_{j_2})$} 
 child { node {$\# (p_\#)$} { } }
 child { node {$b_2 (p_{i_2})$} { node[itria] {} } };
\end{tikzpicture}
\end{minipage}
\end{array}
\]
\caption{Proof of Theorem~\ref{prop-reduct}, Case 7.}
\label{fig-reduct7}
\end{figure}

\paragraph{\bf 8.} 
\emph{$\Ea$ guessed that the data value $d$ of the
  current node is different from the ones of its children and does not appear
  in the subtrees.  Moreover $\Ea$ guessed that the data values of both
  children of the current node are equal.}

\noindent
To handle this case for all transitions 
$\tau = (p_{\#},b,p_{\#}) \ra p_{i}$,
$\tau' = (p_{i_1},\#,p_{i_2})\ra p_{j}$ of $\Ba$, 
where none of $p_{i_1},p_{i_2},
p_i$ are $\#$-states, $\Ea$ has the following transitions:
\begin{description}
\item[\rm $\epsilon$-transitions:]\quad\\
from a state $q_1$ of $\Ba$-state $p_{i_1}$
it decreases the counters $i_1$ 
and moves to state $q_{\tau,\tau'}^1$\\
from a state $q_2$ of $\Ba$-state $p_{i_2}$ 
it decreases the counter $i_2$ 
and moves to state $q_{\tau,\tau'}^2$\\
from a state $q_{\tau,\tau'}$ 
it increases the counters $i$, and $j$ 
and moves to a state $q$ of $\Ba$-state $p_i$

\item[\rm up-transition:] $(q_{\tau,\tau'}^1,a,q_{\tau,\tau'}^2,q_{\tau,\tau'})$.
\end{description}

\paragraph{\bf Correctness.}  
We argue as in the previous cases with the following modifications.

From $\dtree_1$ and $\dtree_2$ we first apply a bijection making sure that the
data values of their roots are equal (let us call it $d_1$). 

For each $\theta \in \chi$ we select the associated data values making sure
it is not $d_1$. The decrement in the $\epsilon$-transitions make sure
that this is always possible. We then perform the same identification as in the
previous case. 
The rest of the argument is similar after setting
$\dtree=d(\dtree_1,\dtree_2)$, where $d$ is a fresh new value.

\begin{figure}
\small
\[
\begin{array}{ccc}
\tree'=\btree\otimes\dtree = 
\hspace{-8mm}
\begin{minipage}[t]{24mm}
\vspace{-3.5mm}
\begin{tikzpicture}
  [level distance=7mm,
   level 1/.style={sibling distance=12mm},
   level 2/.style={sibling distance=4mm}]
\node {$(b, d)$} 
 child { node {$(b_1, d_1)$} { node[itria] {} } }
 child { node {$(b_2, d_1)$} { node[itria] {} } };
\end{tikzpicture}
\end{minipage}
& 
\tree'[d] = 
\hspace{-8mm}
\begin{minipage}[t]{24mm}
\vspace{-3.5mm}
\begin{tikzpicture}
  [level distance=7mm,
   level 1/.style={sibling distance=12mm},
   level 2/.style={sibling distance=4mm}]
\node {$b (p_i)$} 
 child { node {$\# (p_\#)$} { } }
 child { node {$\# (p_\#)$} { } };
\end{tikzpicture}
\end{minipage}
& 
\tree'[d_1] = 
\hspace{-6mm}
\begin{minipage}[t]{24mm}
\vspace{-3.5mm}
\begin{tikzpicture}
  [level distance=7mm,
   level 1/.style={sibling distance=12mm},
   level 2/.style={sibling distance=4mm}]
\node {$\# (p_{j})$} 
 child { node {$b_1 (p_{i_1})$} { node[itria] {} } }
 child { node {$b_2 (p_{i_2})$} { node[itria] {} } };
\end{tikzpicture}
\end{minipage}
\end{array}
\]
\caption{Proof of Theorem~\ref{prop-reduct}, Case 8.}
\label{fig-reduct8}
\end{figure}

\paragraph{\bf 9.} 
\emph{$\Ea$ guessed that the data value $d$ of the
  current node is different from the ones of its children and does not appear in both subtrees. 
  Moreover $\Ea$ guessed that the data values of both children of the current node (say $d_1$ and $d_2$) 
  are distinct but appear in the other subtree with respective associated $\Ba$-state $p_{\ell_1}$ 
  and $p_{\ell_2}$.
  Note that $p_{\ell_1}$, $p_{\ell_2}$ must be $\#$-states in $Q_\Ba \setminus \{ p_\#\}$.}

\noindent 
This case is treated as before with the expected transitions.

\begin{figure}
\small
\[
\begin{array}{cccc}
\tree'=\btree\otimes\dtree = 
\hspace{-8mm}
\begin{minipage}[t]{24mm}
\vspace{-3.5mm}
\begin{tikzpicture}
  [level distance=7mm,
   level 1/.style={sibling distance=12mm},
   level 2/.style={sibling distance=4mm}]
\node {$(b, d)$} 
 child { node {$(b_1, d_1)$} { node[itria] {} } }
 child { node {$(b_2, d_2)$} { node[itria] {} } };
\end{tikzpicture}
\end{minipage}
& 
\tree'[d] = 
\hspace{-8mm}
\begin{minipage}[t]{24mm}
\vspace{-3.5mm}
\begin{tikzpicture}
  [level distance=7mm,
   level 1/.style={sibling distance=12mm},
   level 2/.style={sibling distance=4mm}]
\node {$b (p_i)$} 
 child { node {$\# (p_\#)$} { {} } }
 child { node {$\# (p_\#)$} { {} } };
\end{tikzpicture}
\end{minipage}
& 
\tree'[d_1] = 
\hspace{-6mm}
\begin{minipage}[t]{24mm}
\vspace{-3.5mm}
\begin{tikzpicture}
  [level distance=7mm,
   level 1/.style={sibling distance=12mm},
   level 2/.style={sibling distance=4mm}]
\node {$\# (p_{j_1})$} 
 child { node {$b_1 (p_{i_1})$} { node[itria] {} } }
 child { node {$\# (p_{\ell_1})$} { node[itria] {} } };
\end{tikzpicture}
\end{minipage}
& 
\tree'[d_2] = 
\hspace{-6mm}
\begin{minipage}[t]{24mm}
\vspace{-3.5mm}
\begin{tikzpicture}
  [level distance=7mm,
   level 1/.style={sibling distance=12mm},
   level 2/.style={sibling distance=4mm}]
\node {$\# (p_{j_2})$} 
 child { node {$\# (p_{\ell_2})$} { node[itria] {} } }
 child { node {$b_2 (p_{i_2})$} { node[itria] {} } };
\end{tikzpicture}
\end{minipage}
\end{array}
\]
\caption{Proof of Theorem~\ref{prop-reduct}, Case 9.}
\label{fig-reduct9}
\end{figure}

\paragraph{\bf 10.} 
\emph{$\Ea$ guessed that the data value $d$ of the
  current node is different from the ones of its children and does not appear in both subtrees. 
  Moreover $\Ea$ guessed that the data value $d_2$ of the right child 
  in its left subtree with $p_{\ell_1}$ as associated $\Ba$-state in this left subtree
  and that the data value $d_1$ of
  the left child does not appear in the right subtree. 
  Note that $p_{\ell_1}$ must be a $\#$-states in $Q_\Ba \setminus \{ p_\#\}$.}

\noindent 
This case is treated as before with the expected transitions.

\begin{figure}
\small
\[
\begin{array}{cccc}
\tree'=\btree\otimes\dtree = 
\hspace{-8mm}
\begin{minipage}[t]{24mm}
\vspace{-3.5mm}
\begin{tikzpicture}
  [level distance=7mm,
   level 1/.style={sibling distance=12mm},
   level 2/.style={sibling distance=4mm}]
\node {$(b, d)$} 
 child { node {$(b_1, d_1)$} { node[itria] {} } }
 child { node {$(b_2, d_2)$} { node[itria] {} } };
\end{tikzpicture}
\end{minipage}
& 
\tree'[d] = 
\hspace{-8mm}
\begin{minipage}[t]{24mm}
\vspace{-3.5mm}
\begin{tikzpicture}
  [level distance=7mm,
   level 1/.style={sibling distance=12mm},
   level 2/.style={sibling distance=4mm}]
\node {$b (p_i)$} 
 child { node {$\# (p_\#)$} { {} } }
 child { node {$\# (p_\#)$} { {} } };
\end{tikzpicture}
\end{minipage}
& 
\tree'[d_1] = 
\hspace{-6mm}
\begin{minipage}[t]{24mm}
\vspace{-3.5mm}
\begin{tikzpicture}
  [level distance=7mm,
   level 1/.style={sibling distance=12mm},
   level 2/.style={sibling distance=4mm}]
\node {$\# (p_{j_1})$} 
 child { node {$b_1 (p_{i_1})$} { node[itria] {} } }
 child { node {$\# (p_\#)$} { } };
\end{tikzpicture}
\end{minipage}
& 
\tree'[d_2] = 
\hspace{-6mm}
\begin{minipage}[t]{24mm}
\vspace{-3.5mm}
\begin{tikzpicture}
  [level distance=7mm,
   level 1/.style={sibling distance=12mm},
   level 2/.style={sibling distance=4mm}]
\node {$\# (p_{j_2})$} 
 child { node {$\# (p_{\ell_1})$} { node[itria] {} } }
 child { node {$b_2 (p_{i_2})$} { node[itria] {} } };
\end{tikzpicture}
\end{minipage}
\end{array}
\]
\caption{Proof of Theorem~\ref{prop-reduct}, Case 10.}
\label{fig-reduct10}
\end{figure}

\paragraph{\bf 11.} 
\emph{We omit the symmetric cases.}
\end{proof}

\section{From \texorpdfstring{\ebvass}{EBVASS} to \texorpdfstring{\fotwo}{FO2}}\label{sec-counter-fotwo}

We show in this section that reachability of \ebvass 
can be expressed as a sentence of \fotwo. 
This concludes the loop of reductions, showing that 
reachability for \ebvass, satisfiability of \fotwo and emptiness of \dad 
are equivalent as decision problems. 
The proof essentially mimics the reduction from \bvass to \fotwo described 
in~\cite{BojanczykMSS09jacm} with
extra material in order to handle the extra features.

\begin{theorem}\label{thm-bvass-fodeux}
 The reachability problem for \ebvass reduces to the satisfiability problem for \fotwo.
\end{theorem}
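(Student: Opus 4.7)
The plan is to construct, given an \ebvass $\Ea$ and a target state $q$, a sentence $\phi$ of \fotwo over data trees such that $\phi$ is satisfiable iff $\Ea$ has a run reaching the configuration $(q, v_0)$ at the root of some binary tree. The encoding extends the \bvass-to-\fotwo construction of~\cite{BojanczykMSS09jacm}; I recall its structure and then explain the additional ingredients needed to handle the constraints~$\chi$.

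First, a run of $\Ea$ is encoded as a data tree. Each node is labelled by a finite amount of information (the current state, the up-transition used at that point, the ordered list of $\epsilon$-transitions performed just above) together with auxiliary unary markers introduced by existential second-order quantifications and then projected away. Counter tokens are represented by data values: every $\epsilon$-transition incrementing a counter~$j$ creates a \emph{fresh} data value playing the role of the identity of a new $j$-token, and every $\epsilon$-transition decrementing $j$ must be matched with exactly one $j$-token whose birth site is a descendant. In \fotwo, these two requirements are expressed by a conjunction of (i)~$\forall x \exists y$ formulas stating that every token-birth marked for counter~$j$ has an ancestor token-death of the same counter with $x \dataeq y$, and (ii)~$\forall x \forall y$ formulas enforcing that data values marking tokens are not reused. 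The distribution of counter values between the two children of a \bvass node (which side each incoming token came from) is handled by further existentially quantified unary predicates, exactly as in~\cite{BojanczykMSS09jacm}.

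The new ingredient is the treatment of the constraints in $\chi$. A constraint $\theta = \merc{i_1}{i_2}{i}$, applied at a node $x$ with children $x_1, x_2$ for a number $n \geq 0$, amounts to $n$ merge events, each of which consumes one $i_1$-token from the $x_1$-subtree, consumes one $i_2$-token from the $x_2$-subtree, and creates one $i$-token at~$x$. To express such a three-way event in \fotwo, I assign to each merge event a fresh data value that appears exactly at three positions, distinguished by new unary markers $\mathit{src}_\theta$, $\mathit{sink}^{\mathrm{L}}_\theta$, $\mathit{sink}^{\mathrm{R}}_\theta$: the creation site at $x$ (marked $\mathit{src}_\theta$), the consumption in the left subtree (marked $\mathit{sink}^{\mathrm{L}}_\theta$), and the consumption in the right subtree (marked $\mathit{sink}^{\mathrm{R}}_\theta$). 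The three-way matching is then expressed as two binary \fotwo constraints of the form $\forall x \exists y$ (from $\mathit{src}_\theta$ to each of the two sinks, with $x \dataeq y$ and the appropriate $\SAD$/$\parallel$ pattern), together with $\forall x \forall y$ uniqueness clauses on the role markers. The fact that $\chi$ does not depend on the automaton's state, emphasized after the definition of runs, is exactly what allows this matching to be purely local and expressible without state-tracking.

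The main obstacle is precisely the three-way nature of the merge obligation: \fotwo permits only two variables, so each constraint event must be decomposed into pairwise assertions while still enforcing that the three marked positions come from the \emph{same} merge. My plan is to rely on the uniqueness of each merge's data value and on the role markers $\mathit{src}_\theta$, $\mathit{sink}^{\mathrm{L}}_\theta$, $\mathit{sink}^{\mathrm{R}}_\theta$ to stitch together the two binary implications into a coherent triple: uniqueness of data values among $\mathit{src}_\theta$-nodes, together with the functional $\forall x \exists y$ matchings, suffices to pin down the three-way correspondence. Once this is done, the remaining conditions---that labels along the tree form a valid $\Ea$-computation, that up-transitions are respected, that $\epsilon$-transition chains are consistent with the recorded counter operations, and that the root carries the final state $q$ with all counters balanced---are regular tree properties and can be turned into \fotwo sentences by the usual Scott-normal-form manipulations (as already used in Section~\ref{app:normal-form}), completing the reduction.
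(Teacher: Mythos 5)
Your overall encoding is the same as the paper's: represent a run as a labelled tree, use data values as token identities for increments/decrements (key constraints on each role label plus ancestor/descendant matching in both directions force a perfect matching and hence non-negativity and zero counters at the root), and tie the three positions of each merge event together by a shared data value carried by three role labels (your $\mathit{src}_\theta$, $\mathit{sink}^{\mathrm{L}}_\theta$, $\mathit{sink}^{\mathrm{R}}_\theta$ are the paper's $T_\theta$, $L_\theta$, $R_\theta$). Your observation that pairwise $\forall x\exists y$ matchings plus uniqueness of data values per role pin down the three-way correspondence is correct and is exactly how the paper stitches the triple together.

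However, there is a genuine gap in the converse direction (from a model of $\phi$ back to a run), and it is precisely the part of the paper's proof that requires new work beyond~\cite{BojanczykMSS09jacm}. Your formulas force each $\mathit{src}_\theta$-node to have one $\mathit{sink}^{\mathrm{L}}_\theta$-descendant and one $\mathit{sink}^{\mathrm{R}}_\theta$-descendant with its data value, and (at best) that these two sinks are incomparable; but nothing in \fotwo forces the source to lie in the \emph{same} up-transition block as the binary node that is the least common ancestor of its two sinks. A model may therefore charge a merge's increment to an up-transition several levels above the one where its two decrements actually sit, and the resulting tree does not decode to a run. The residual condition you would need --- that within each single up-transition block the numbers of $\mathit{src}_\theta$, $\mathit{sink}^{\mathrm{L}}_\theta$ and $\mathit{sink}^{\mathrm{R}}_\theta$ occurrences agree --- is an unbounded counting condition, so your claim that ``the remaining conditions \dots are regular tree properties'' fails exactly here. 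The paper resolves this by defining \emph{pseudo-encodings} (trees satisfying all shape conditions except the equal-count one) and then proving a normalization lemma: in any model, one can slide each $T_\theta$-node (with its $I_i$-parent) down to just above the least common ancestor $z$ of its two sinks, and slide the sinks (with their $D$-parents) up to just below $z$, while preserving satisfaction of $\phi$; iterating yields a genuine encoding. Your proposal needs this step or an equivalent.

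A secondary concern: you describe the two sinks as lying specifically in the left and right subtrees and propose to enforce this ``with the appropriate $\SAD/\parallel$ pattern.'' Over the signature $(\descendant,\neighbor,\dataeq)$, \fotwo cannot distinguish which of two incomparable descendants is in the left subtree; it can only assert incomparability. This is exactly why the \ebvass constraints are given a commutative semantics (the $n^1_\theta$/$n^2_\theta$ split in the definition of runs), and the paper's conclusion notes that the non-commutative variant would require adding document order to the logic. If your construction relies on genuinely distinguishing the left sink from the right sink, that step is not expressible; if you fall back on incomparability, you must argue (as the paper implicitly does via commutativity) that this weaker condition still suffices.
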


\begin{proof}
Given an \ebvass $\Ea=(Q,\A,q_0,k,\delta,\chi)$ and a state $q\in Q$, 
we compute a sentence $\phi \in \fotwo$ such that $\phi$ has a model 
iff the configuration $(q,v_0)$ is reachable in some tree
(where $v_0$ is the function setting all counters to $0$).

\noindent
We associate to $\Ea$ the following finite alphabet
$\A_\Ea = \delta \cup\{ D_i, I_i \mid 1\leq i \leq k\} 
               \cup \{ T_\theta, L_\theta, R_\theta \mid \theta \in \chi \}$. 
Intuitively $D_i$ says that the counter
$i$ has been decreased, $I_i$ says that the counter $i$ has been increased,
$\delta$ encodes the transition relation, and the letters $L_\theta$, $R_\theta$ and $T_\theta$ will be used to enforce 
the constraint $\theta$. 
The formula $\phi$ we construct accept all binary data trees 
of $\Trees(\A_\Ea\times \D)$ encoding runs of $\Ea$. It turns out that $\phi$
accepts more trees but any accepted trees of $\phi$ can be transformed into an accepting
run of $\Ea$ with simple transformations.

\noindent
We start with the encoding of a single transition $\mu \in \delta$.

\noindent
If $\mu$ is an $\epsilon$-transition then we encode it with two nodes
$x,y$ where $y$ is the unique child of $x$ and the label of $x$ is $\mu$ while
the label of $y$ is $D_i$ (resp. $I_i$) 
if $\mu$ was decreasing (resp. increasing) counter $i$.

\noindent
If $\mu$ is an up-transition, then we encode
it as a subtree of the following form:
\begin{itemize}
\item The root has label $\mu$,
\item below the root there is a (vertical) sequence of nodes of arity one whose 
  labels form a word of
  $\displaystyle\sum_{\theta = \merc{i_1}{i_2}{i}\in\chi} (I_iT_\theta)^*$, 
  where $\sum$ denotes concatenation,
\item the last node of that sequence has arity two and two branches starts from that node,
\item the sequence of labels of the left branch forms a word of
$\displaystyle\sum_{\theta=\merc{i_1}{i_2}{i}\in\chi} (D_{i_1}L_\theta D_{i_2}R_\theta)^*$,
\item the sequence of labels of the right branch forms a word of
$\displaystyle\sum_{\theta=\merc{i_1}{i_2}{i}\in\chi} (D_{i_1}R_\theta D_{i_2}L_\theta)^*$,
\item for all $\theta$ the number of occurrences of $L_\theta, R_\theta$ and
  $T_\theta$ are the same.
\end{itemize}
A tree satisfying all these items, except maybe the last one, is said to be a \emph{pseudo-encoding} of the up-transition $\mu$. 
Notice that pseudo-encodings of up-transitions form a regular tree language.

\noindent
From there, the encoding of a run is obtained in the obvious way by concatenating
encodings of transitions.

The formula $\phi$ essentially describes this construction. It first enforces
that the tree has the desired shape: 
\begin{itemize}
\item The tree is a repetition of a sequence of the form: a pseudo-encoding 
of one up-transition followed by the encodings of several $\epsilon$-transitions,
\item the sequencing is valid: if $\mu$ and $\nu$ are consecutive transitions
  in the tree then the starting state of one is the ending state of the other,
\item the initial state $q_0$ can be found at the leaves 
and the state $q$ is reached at the root.
\end{itemize}
Note that the above three conditions can be checked by a standard tree
automaton over $\A_\Ea$, and therefore can be expressed in $\emso^2(\descendant,\neighbor)$.
Therefore, by setting $\A=\A_c\times\A'$ for a
suitable $\A'$ matching the existential part of the $\emso$ formula, 
the property above can be expressed in $\fo^2(\descendant,\neighbor)$.

The formula $\phi$ now needs to make sure that no counter ever gets negative and that
pseudo-encodings of up-transitions are actually real encodings. 
This is where
data values are needed:
The formula $\phi$ enforces that 
\begin{enumerate}[label=\enspace(\arabic*)]
\item \label{it:Di} 
      no two nodes with label $D_i$
      can have the same data value, for $1 \leq i \leq k$,
\item \label{it:Ii} 
      no two nodes with label $I_i$
      can have the same data value, for $1 \leq i \leq k$,
\item \label{it:DiIi} 
      for all $i\in[k]$, every node with label $D_i$ has a descendant 
      with label $I_i$ and with the same data value,
\item \label{it:IiDi} 
      for all $i\in[k]$, every node with label $I_i$ has an ancestor 
      with label $D_i$ and with the same data value.
\end{enumerate}
These four conditions enforce that the counters never get negative and that
they are all set to $0$ at the root.  It remains to enforce that all
pseudo-encodings can be transformed into real encodings.  
This is done with the following conditions.
\enlargethispage{\baselineskip}

\begin{enumerate}[label=\enspace(\arabic*)]
\setcounter{enumi}{4}
\item \label{it:Ttheta} 
      no two nodes with label $T_\theta$, for $\theta \in \chi$,
      can have the same data value,
\item \label{it:Ltheta} 
       no two nodes with label $L_\theta$, for $\theta \in \chi$, 
       can have the same data value,
\item \label{it:Rtheta} 
       no two nodes with label $R_\theta$, for $\theta \in \chi$, 
       can have the same data value,
\item \label{it:TthetaLRtheta} 
      every node with label $T_\theta$ has a descendant with label $L_\theta$ 
      and a descendant with label $R_\theta$
      both with the same data value,
\item \label{it:LRthetaTtheta} 
      every node with label $L_\theta$ or $R_\theta$ has an ancestor with label $T_\theta$ 
      and with the same data value,
\item \label{it:LRtheta} 
  two nodes of label $L_\theta$ and $R_\theta$ with the same data value
  are not comparable with the ancestor relationship.
\end{enumerate}
It now remains to show that $\phi$ has the desired property.
\begin{lemma}\label{lemma-ebvass-fo}
$\phi$ has a model iff $(q, v_0)$ is reachable by $\Ea$.
\end{lemma}
\begin{proof}
{\bf From reachability to models of $\phi$.}
Assume that $(q, v_0)$ is reachable and let $\rho$ be a run of $\Ea$ witnessing this fact.
Let $\atree$ be the tree constructed from $\rho$ by concatenating the sequences
of encodings of transitions of $\rho$ as explained above. 
The binary tree $\atree$ certainly satisfies the ``regular'' part of $\phi$. 
We now assign the data values so that
the remaining part of $\phi$ is satisfied. This is done in the obvious way:
each time a counter $i$ is decremented, as the resulting value is positive,
this means that a matching increment was performed before. Similarly, each time
a constraint $\theta$ is used in a transition $\mu$, we assign one distinct
data value per triple $L_\theta,R_\theta,T_\theta$ 
occurring in the encoding of
$\mu$. The formula was constructed to make the resulting tree a model of $\phi$.

{\bf From models of $\phi$ to reachability.}
Assume now that $\tree=\atree\otimes \dtree \models \phi$. 
Unfortunately, it may happen that $\atree$ 
does not encode a run of $\Ea$ because 
some section corresponds to a pseudo-encoding of an up-transition, 
instead of an expected real encoding. 
However, we show that from $\tree$ we can construct another tree
$\tree'=\atree'\otimes\dtree'$ such that $\tree' \models \phi$ 
and $\atree'$ encodes a real run of $\Ea$.

To see this, let us consider a node $x$ of $\tree$ 
with label $T_\theta$, 
where $\theta = \merc{i_1}{i_2}{i}$, and let $d = \dtree(x)$. 
Let $x_1$ and $x_2$ be two descendants of $x$ 
with respective labels $L_\theta$ and $R_\theta$
and such that $d = \dtree(x_1) = \dtree(x_2)$. 
Let $z$ be the least common ancestor of $x_1$ and $x_2$.
The existence of $x_1$ and $x_2$ is guaranteed by $\phi$ 
(conditions \ref{it:Ttheta}--\ref{it:TthetaLRtheta}). 
The sentence $\phi$ also ensures
that $x$ is an ancestor of $z$ 
(conditions \ref{it:LRthetaTtheta}--\ref{it:LRtheta}).  
By construction the subtree at $z$ must correspond to
a pseudo-encoding of an up-transition $\mu'$.

We now move (down) $x$ and its parent 
(that must have label $I_i$) 
right above $z$ within the coding of $\mu'$.
Similarly we move (up) $x_1$ and its parent
(that must have label $D_{i_1}$) 
right below $z$, and similarly for $x_2$.  
The reader can verify that the
resulting tree is still a model of $\phi$: the regular conditions remain 
obviously satisfied.  
Conditions \ref{it:Di}--\ref{it:IiDi}  
are still valid because the node of label $I_{i_1}$ 
matching the parent of $y$ was already below the initial position of
$y$ and its new position is upward in the tree. 
Finally conditions \ref{it:Ttheta}--\ref{it:LRtheta} 
remain valid by construction.

Repeating this argument eventually yields a model $\tree'=\btree\otimes\dtree'$
of $\phi$ such that $\btree$ is a correct sequencing of encodings of
transitions a $\Ea$. 
This encoding is actually a real run because 
conditions \ref{it:Di}--\ref{it:IiDi} 
of $\phi$ immediately enforces that no counter is ever negative.
\end{proof}
Theorem~\ref{thm-bvass-fodeux} is now immediate from Lemma~\ref{lemma-ebvass-fo}.
\end{proof}

\section{Conclusion}\label{sec-conclusion}
We have seen that satisfiability of \fotwo, emptiness of \dad and reachability
of \ebvass are equivalent problems in terms of decidability. The main open
problem is of course whether they are all decidable or not.

The use of the \ebvass constraints of the form 
$\merc{i_1}{i_2}{i}$
is crucial for the construction of Section~\ref{sec-dad-counter}.
Their semantics cannot be directly simulated with the usual BVASS,
but it is not clear whether \ebvass are strictly more expressive than BVASS,
and whether this extension is needed in order 
to capture the expressive power of \fotwo on data trees. 

In our definition of \ebvass the constraints of the form $\merc{i_1}{i_2}{i}$
have a ``commutative'' semantics. 
Without commutativity, \textit{i.e.}, the rule modifies
only counter $i_1$ on the left child and counter $i_2$ on the right child,
the automata model is more powerful. 
In order to describes its runs as in the
proof of Theorem~\ref{thm-bvass-fodeux}, the logic needs to be able to enforce
that a $L_\theta$ must be to the left of the $R_\theta$ with the same data value. 
This can be done by adding the document order predicate into the logic.
A close inspection of the proof of Theorem~\ref{th-fotwodad} and
Theorem~\ref{prop-reduct} then shows that the extension of \fotwo with the
document order predicate can be captured by a \dad without the commutativity
rule and that such automata can be captured by the non-commutative version of
\ebvass.

In~\cite{BjSch10} it was shown that, over data words, the Data Automata model
of~\cite{BDMSS11} is more expressive than the Register Automata
of~\cite{KF94}. It is not obvious that our automata model \dad extends the
expressive power of the straightforward extension of register automata to data
trees. This remains to be investigated.

\bibliographystyle{abbrv} 

\bibliography{FO2BVASS} 

\end{document}